\def\Choicear{Choice_\alpha^{root}}
\def\Optimalar{Optimal_\alpha^{root}}
\def\State{S\!tate}
\def\vnl{\ell_n}
\def\vnu{u_n}
\def \wbuntil{\overline{\until}_N}
\newtheorem{definition}{Definition}
\newtheorem{theorem}{Theorem}
\newtheorem{proposition}{Proposition}
\newtheorem{lemma}{Lemma}
\begin{document}
\title{A Deontic Logic Analysis of Autonomous Systems' Safety}
%
%

%
%

\author{Colin Shea-Blymyer}
\email{sheablyc@oregonstate.edu}
\affiliation{%
  \institution{Oregon State University}
}

\author{Houssam Abbas}
\email{houssam.abbas@oregonstate.edu}
\affiliation{\institution{Oregon State University}}

\begin{abstract}
We consider the pressing question of how to model, verify, and ensure that autonomous systems meet certain \textit{obligations} (like the obligation to respect traffic laws), and refrain from impermissible behavior (like recklessly changing lanes).
Temporal logics are heavily used in autonomous system design;
however, as we illustrate here, temporal (alethic) logics alone are inappropriate for reasoning about obligations of autonomous systems.
This paper proposes the use of Dominance Act Utilitarianism (DAU), a deontic logic of agency, to encode and reason about obligations of autonomous systems.
We use DAU to analyze Intel's Responsibility-Sensitive Safety (RSS) proposal as a real-world case study.
We demonstrate that DAU can express well-posed RSS rules, 
formally derive undesirable consequences of these rules, 
illustrate how DAU could help design systems that have specific obligations,
and how to model-check DAU obligations.
\end{abstract}

\maketitle              

\keywords{Deontic logic  \and Autonomous vehicles \and Model checking \and Responsibility-Sensitive Safety \and Dominance Act Utilitarianism.}

\section{Obligations, Permissions and Norms For Autonomous Vehicles}
\label{sec:intro}
There is now a realistic prospect that Autonomous ground Vehicles (AVs) will be deployed on public roads in the next few years, with Waymo already charging customers for self-driving taxi in Arizona~\cite{Hawkins.2019}. 
While companies produce `event reports' to regulators, there is a worrying sparsity of rigorous verification methods, and of external independent assessment, of the vehicles' performance.
The most pressing issue is that of verifying safety.
So far, the vast majority of the work in formal verification of AVs used the tools of \textit{alethic temporal logic} (like Linear~\cite{Pnueli77sfcs} or Metric Temporal Logic~\cite{Koymans90}) to express behavioral specifications of system models.
Alethic logic is the logic of \textit{necessity and possibility}:
for example, if $p$ is a predicate, $\always p$ says that $p$ is true in every accessible world - that is, $p$ is necessary.
Possibility is then formalized as $\eventually p \defeq \neg \always \neg p$: saying that $p$ is possible is the same as saying that it is not the case that $\neg p$ is necessary.
And so on.
The best known instantiation of this in Verification is LTL~\cite{MannaP92}, in which an accessible world is a moment in the (linear) future.
Thus $\always p$ formalizes `$p$ is true in every future moment', and $\eventually p$ formalizes `$p$ is true in some future moment'.

It is, however, equally important to think in terms of \textit{obligations and permissions} of the autonomous system:
for instance, we may wish to say that `It is obligatory for the AV to not rear-end a car', or `It is permissible to drive on the shoulder if the car ahead brakes suddenly'.
Obligations, permissions and prohibitions are also pervasive when discussing ethical questions: what should the AV do when faced with two equally unsavory but inevitable alternatives?
Obligations and permissions are collectively called \textit{norms} and statements about them are called normative statements.
A prominent example of a proposed normative system for Autonomous Vehicles (AVs) is Intel's Responsibility-Sensitive Safety (RSS)~\cite{RSSv6}, which states what the AV should and should not do to avoid accidents.
\textit{It is essential to logically formalize proposed norms for autonomous systems to enable automatic reasoning about their logical consistency, consequences, and automate system design}.
While all current work in AV verification and testing uses temporal logics~\cite{YaghoubiF19MLtesting}, which are types of alethic logic, it has been understood for over 70 years that \textit{the logic of norms is different from that of necessity}~\cite{dlStanford}: applying alethic logic rules to normative statements leads to conclusions that are intuitively paradoxical or undesirable.
Consider the following statements:
\begin{enumerate}[A.]
	\item The car will eventually change lanes: this is a statement about possibility. It says nothing about whether the car plays an active role in the lane change (e.g., perhaps it will hit a slippery road patch).
	\item The car sees to it that it changes lanes: this is a statement about agency. 
	It tells us that the car is an active agent in the lane change, or is choosing to change lanes.
	\item The car can change lanes: this is a statement about ability. The car might be able to do something, but have no `choice' or agency in the matter.
	\item The car ought to change lanes: this is a statement about obligation, a concept not captured in the first three statements.
\end{enumerate}
These are qualitatively different statements and there is no a priori equivalence between any two of them.
The logic we adopt should reflect this: its operators and inference rules should model these aspects.
Alethic logics like LTL cannot do so.

We now give a simple but fundamental example, drawn from~\cite{dlStanford}, illustrating this point.
(In Section~\ref{sec:dau} we give an AV-specific example.)
One might be tempted to formalize obligation using the necessity operator $\always$: that is, formalize `The AV should stay in its lane' by $\always \texttt{stay-in-lane}$.
However, in alethic logic, $\always p \implies p$: if $p$ is necessarily true then it is true.
If we interpret $\always$ as obligation this reads as $\mathbf{Obligatory}~p \implies p$: this is clearly non-sensical because agents sometimes violate their obligations so some obligatory things are not true.
%
%
This leads us to a major question in studying obligations: the automatic derivation of what an agent should do when some primary obligations are violated. 
I.e. we wish to study statements of the form $\mathbf{Obligatory}p \land \neg p \implies ...$.
 This is simply impossible in pure alethic logic, since $\always p \land \neg p \implies q$ is trivially true for \textit{any} $p$ and $q$.
Thus alethic logics (including common temporal logics like LTL, MTL or CTL~\cite{ClarkeGP99}) are not appropriate, on their own, for automatic reasoning about norms.

\textit{Deontic logic}~\cite{DLHandbook} has been developed specifically to reason about normative statements, starting with von Wright~\cite{vonWright51DL}. 
It is widely used in contract law, including software contracts.
There are many flavors of deontic logic~\cite{McNamaraChapter}.
In this paper, we adopt Dominance Act Utilitarianism (DAU) developed by Horty~\cite{Horty01DLAgency} because it explicitly models all four aspects above: necessity, agency, ability and obligation.
It includes a temporal logic as a component so we can describe temporal behaviors essential to system design, 
and it uses branching time, essential for modeling uncontrollable environments.

To assess whether DAU is appropriate for reasoning about the norms of autonomous systems, we formalize a subset of Intel's Responsibility-Sensitive Safety, or RSS, in DAU. 
RSS proposes a set of norms or rules that, if followed by all cars in traffic, would lead to zero accidents~\cite{RSSv6}.
The RSS proposal is expressed in the language of continuous-time dynamical systems and ordinary differential equations, but the rules to be followed are not formalized logically, so it is not possible to reason about them.
This work integrates formal methods in AV design by complementing the dynamical equations-based presentation of RSS in~\cite{RSSv6} with a deontic logic formalism.
We formalize RSS in DAU, which achieves three purposes:
first, it demonstrates the usefulness of DAU in a real use case, namely, the analysis of a safety proposal by a major player in autonomous driving technology.
Second, it realizes a necessary first step towards automated system design.
Finally, it allows a systematic discovery of implicit assumptions, and undesirable consequences of any such proposals.
A framework to do this is still missing from the literature. 
Our contributions in this paper are to:
\begin{enumerate}
	\item formalize the normative system of RSS in DAU, to highlight the subtle decisions that need to be made when developing a \textit{rigorous} safety specification (Section~\ref{sec:rss formalization});
	\item partially infer the system structure using the DAU formalization (Section~\ref{sec:analysis inferring policies});
	\item derive undesirable consequences of the RSS norms, pointing the way to further necessary refinements of the norms (Section~\ref{sec:other accidents}); and 
	\item develop a model-checking algorithm of DAU specifications that allows to establish whether a system has a given obligation or not (Section~\ref{sec:model checking}).
\end{enumerate}

\subsection{Related work}
\label{sec:why this dl}
There is a wide variety of deontic logics, tailored to different ends~\cite{DLHandbook}. 
Standard Deontic Logic has many well-known paradoxes~\cite{McNamaraChapter}, which have spurred the proposal of alternatives to remedy them.
Some variations are commonly used to specify legal and software contracts as in~\cite{Prisacariu2012CL}.
Various attempts were made to integrate deontic logic with temporal modalities 
(e.g.,~\cite{Giordano13TDLASP} and~\cite{Raimondi04automaticverification}).
Decision procedures exist for some logics, 
like the checker in~\cite{Lomuscio2017}.
Gerdes et al.~\cite{Gerdes2015} have compared a deontological approach to AV design with a consequentialist approach by formalizing them as an optimal control problem.
Rizaldi et al.~\cite{Althoff15TrafficRules} formalize six traffic rules in Higher Order Logic to be passed to an interactive theorem prover.
As it is our goal to logically analyze normative safety rules and use them in system design,~\cite{Gerdes2015} and~\cite{Althoff15TrafficRules} present directions of investigation that are orthogonal to ours. 
Alternating-time Temporal Logic (ATL) was proposed in~\cite{AlurHK02ATL} and extended in~\cite{vanderHoek2003ATEL} to reason about groups of agents. ATL seems to use sure-thing reasoning, like DAU (see Section~\ref{sec:dau}), but does not natively support a notion of obligation. 
The RSS proposal itself~\cite{RSSv6} uses a point mass dynamical model 
to derive definitions of minimum safe distances between two cars. 
It also proposes motion planning policies to avoid accidents; e.g., if the car ahead hits maximum brakes, then the following car should hit maximum brakes within a delay $\tau$, and so on.
The RSS rules are not formalized in any logic in~\cite{RSSv6}, nor are its logical consequences examined. 
This paper leverages DAU's formulation of agency~\cite[3.3]{Horty01DLAgency} to formalize well-posed RSS rules and analyze their implications. DAU further distinguishes itself through its distinction between what ought to be the case and what an agent ought to do~\cite[3.3]{Horty01DLAgency}. A related formulation to DAU is found in~\cite{Brunel08Propagation}.

\section{Dominance Act Utilitarianism}
\label{sec:dau}
\subsection{A deontic logic over branching time}
\label{sec:horty's logic}
This section summarizes the main aspects of DAU developed in~\cite{Horty01DLAgency}, starting with classical branching time models.
Let \textit{Tree} be a set of \textit{moments} with an irreflexive, transitive ordering relation $<$ such that for any three moments $m_1,m_2,m_3$ in $\Tree$, if $m_1<m_3$ and $m_2 < m_3$ then either $m_1<m_2$ or $m_2 < m_1$.
There is a unique \textit{root moment} of the tree satisfying $root < m'$ for all $m'\neq root$.
A \textit{history} is a maximal linearly ordered set of moments from \textit{Tree}: intuitively, it is a branch of the tree that extends infinitely.
Given a moment $m \in$ \textit{Tree}, the set of histories that go through $m$ is $H_m \defeq  \{ h \such m \in h\}$.
See Fig.~\ref{fig:choices}
We will frequently refer to moment/history pairs $m/h$, where $m \in$ \textit{Tree} and $h \in H_m$.
\begin{definition}\cite[Def. 2.2]{Horty01DLAgency}
	\label{def:frame and model}
	With $AP$ a set of atomic propositions,
	a \emph{branching time model} is a tuple $\Model=(\Tree, <,v)$ where \textit{Tree} is a tree of moments with ordering $<$ and $v$ is a function that maps $m/h$ pairs in $\Model$ to sets of atomic propositions from $2^{AP}$.
\end{definition}
A branching time model can be seen as the result of executing a non-deterministic automaton that models all agents in the system. 
While we will frequently speak of one agent's obligations for simplicity, the reader should keep in mind that a model $\Model$ can represent the possible evolutions of several agents.
\begin{figure*}[t]
	\centering
	\includegraphics[height=2.25in,keepaspectratio]{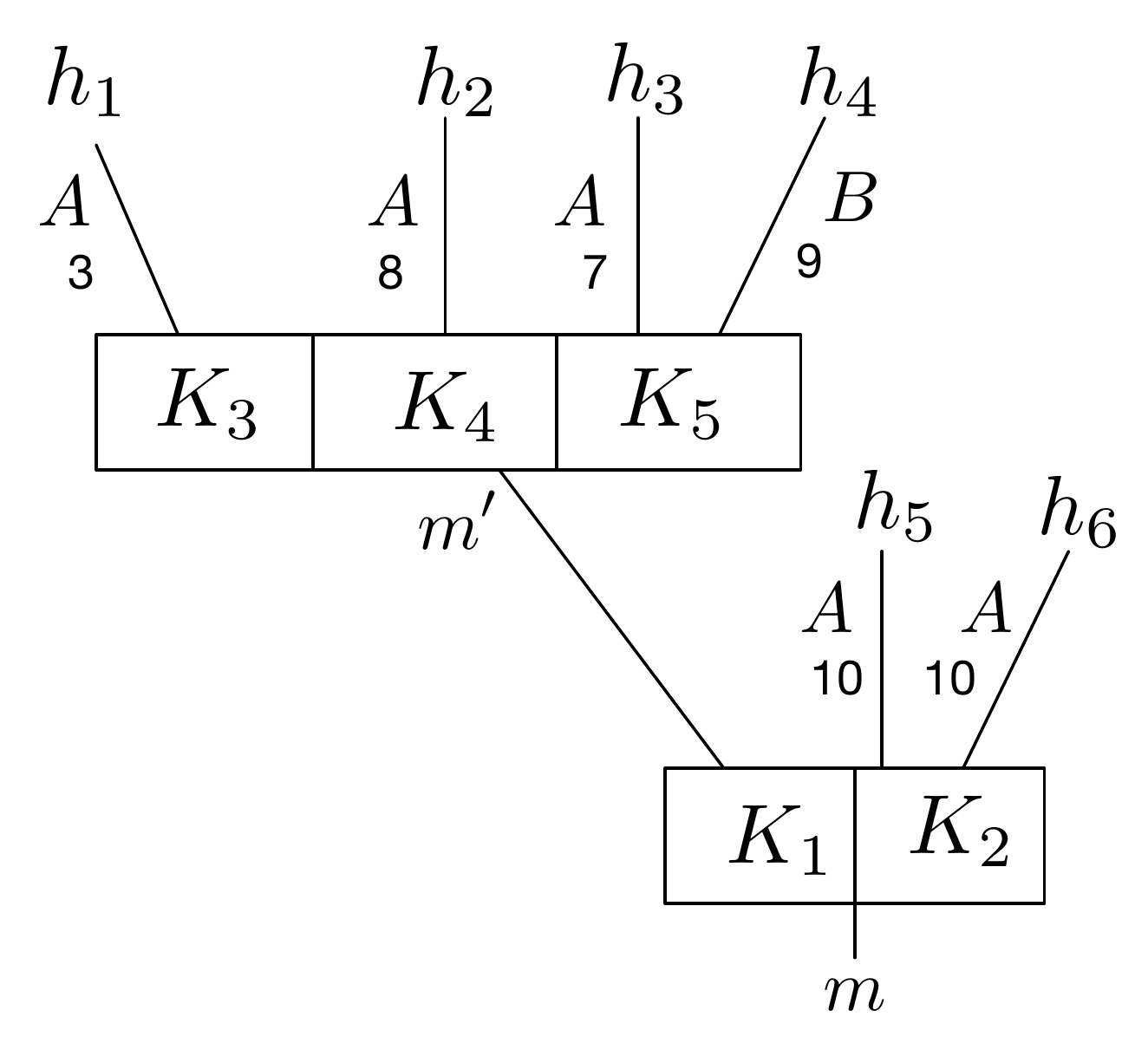}
	\caption{
	A utilitarian stit model for an agent $\alpha$, showing moments $m<m'$ with sets of histories $H_{m} = \{h_1,\ldots, h_6\}$ and $H_{m'}=\{h_1,\ldots, h_4\}$. 
		Each moment is marked with the actions available to $\alpha$ at that moment: $\Choiceam = \{K_1,K_2\}$ and $Choice_\alpha^{m'} = \{K_3,K_4,K_5\}$. 
	Action $K_2=\{h_5,h_6\}$ and $K_4=\{h_2\}$. 
	Each history is marked with the formula(s) that it satisfies and with its value $\Value(h)$, e.g., $h_1$ satisfies $A$ and has value 3.
	$m/h_5 \models \cstitaa$ since $\Choiceam(h_5) = K_2$, and both $h_5$ and $h_6$ satisfy $A$. 
	On the other hand, $m/h_1 \nvDash \cstit{\alpha}{A}$ since $\Choiceam(h_1) = K_1= \{h_1,h_2,h_3,h_4\}$ and $h_4$ does not satisfy $A$. 
	$\Optimalam = \{K_2\}$ so $m/h_5 \models \Ostit{\alpha}{A}$. 
	$\Optimal{\alpha}{m'} = \{K_4,K_5\}$ and so $\alpha$ has no obligations at $m'$ since there is no formula $\formula$ s.t. $|\formula|_{m'} \supseteq K_4 \cup K_5$ (See Def.~\ref{def:ought semantics}). 
	Finally, $m/h_5 \models \dstit{\alpha}{A}$ because $K_2 \subset |A|_m$ and $H_m \neq |A|_m=\{h_1,h_2,h_3,h_5,h_6\}$.
	} 
	\label{fig:choices}
\end{figure*}

We will use \CTLs~as the tense logic on branching time models - see~\cite{ClarkeGP99} for details.\footnote{The development of DAU in~\cite{Horty01DLAgency} uses a restricted temporal logic, but that is immaterial here.}
\CTLs~ includes computational tree logic (CTL) and linear temporal logic (LTL), and has become widely used in model checking. \CTLs~ can produce sentences like 
\(\formula := \exists X (p) \land \forall \eventually \always (p)\) which can be interpreted as `there exists a path where $p$ holds at the next state, and all paths will eventually always satisfy $p$'.
\CTLs~allows us to formalize the temporal evolution of events along a given history $h$  (e.g., $\eventually \formula$),
and quantify over histories passing through a moment $m$ (e.g., $\forall \formula$ meaning `for all histories, $\formula$ holds').
In this paper, to retain a uniform satisfaction relation like~\cite{Horty01DLAgency}, we will speak of formulas holding or not at an $m/h$ pair: for a pair $m/h$ in a model $\Model$, we write $\Model, m/h \models \formula$, where it is always the case that $h\in H_m$.
There should be no confusion as a \CTLs~path formula is evaluated along $h$ and a state formula is evaluated at $m$.

A formula $\formula$ is identified at moment $m$ with the set of histories where it holds 
\begin{equation}
\label{eq:Amm}
|\formula|_m^\Model \defeq \{h \in H_m \such \Model, m/h \models \formula\}
\end{equation}
Where there's no risk of ambiguity, we drop $\Model$ from the notation, writing $|\formula|_m$, etc.


The rest of this section is dedicated to the exposition of the properly deontic aspects of DAU.

\paragraph{Choice}
Let $\Agent$ be a set of agents, which represent, for example, the cars in traffic. 
Consider an agent $\alpha \in \Agent$ and a given model $\Model$.
Then at every moment $m$, $\alpha$ is faced with a choice of actions which we denote by $\Choiceam$.
Intuitively, an action causes some histories from $H_m$ to no longer be realizable, while others still are. 
Thus we can identify each action $K \in \Choiceam$ with the set of histories that are still realizable after taking the action, and we may write $K \subseteq H_m$.
See moments and actions in Fig.~\ref{fig:choices}.
$\Choiceam$ must obey certain constraints which we relegate to Appendix~\ref{apx:choice}. 

\paragraph{Agency}
Agency is defined via the Chellas `sees to it' operator $cstit$, named in honor of Brian Chellas who introduced an analogous operator in \cite{chellas1968logical}.
(Saying `John sees to it that the window is open' means that John ensures the window is open). 
Intuitively, an agent sees to it that $A$ by taking action $K$ at $m/h$ iff, whatever other history $h'$ could've resulted from the action, $A$ is true at $m/h'$ as well. 
Thus, the non-determinism does not prevent $\alpha$ from achieving $A$.
Let $\Choiceam(h)$ be the unique action that contains $h$.
In Fig.~\ref{fig:choices} $\Choiceam(h_1) = K_1 =  \{h_1,h_2,h_3,h_4\}$.
\begin{definition}[Chellas stit]~\cite[Def. 2.7]{Horty01DLAgency}
	\label{def:cstit}
	With agent $\alpha$ and formula $\formula$
	\[\Model,m/h \models [\alpha\,cstit:\formula] \text{ iff } \Choiceam(h) \subseteq |\formula|_m^\Model\]
	\end{definition}
See Fig.~\ref{fig:choices}.
We also define a \textit{deliberative stit} operator, which captures the notion that an agent can only truly be said to do something if it also has the choice of not doing it. 

 \begin{definition}[Deliberative stit]~\cite[Def. 2.8]{Horty01DLAgency}
	\label{def:dstit}
		With $\alpha$ and $\formula$ as before, 
		\[\Model,m/h \models [\alpha~dstit\!:\formula] \text{ iff } \Choiceam(h) \subseteq |\formula|_m^\Model \text{ and } |\formula|_m^\Model \neq H_m\]
\end{definition}
Thus $\dstit{\alpha}{A}$ iff some histories don't satisfy $A$ but $\alpha$'s choice ensures $A$.
See Fig.~\ref{fig:choices}.
The operators $cstit$ and $dstit$ are not interchangeable and they fulfill complementary roles. 
This paper focuses on obligation statements of the following form.
\begin{definition}[Obligations]
	\label{def:obligation structure}
	Let $\alpha$ be an agent.
	An \emph{obligation} $A$ is either a \CTLs~formula, or a statement of the form $\dstit{\alpha}{\formula}$ or $~\neg \dstit{\alpha}{\formula}$ where $\formula$ is a \CTLs~formula.	
\end{definition}
Like Eq.~\eqref{eq:Amm} for \CTLs~formula, we identify an obligation $A$ at moment $m$ with the set of histories where it holds
\begin{equation}
\label{eq:Amm2}
|A|_m^\Model \defeq \{h \in H_m \such \Model, m/h \models A\}
\end{equation}

Obligations can be used in \textit{stit} formulations, similarly to formulas in definitions \ref{def:cstit} and \ref{def:dstit}:
	\[\Model,m/h \models [\alpha\,cstit:A] \text{ iff } \Choiceam(h) \subseteq |A|_m^\Model\]
	and
	\[\Model,m/h \models [\alpha~dstit\!:A] \text{ iff } \Choiceam(h) \subseteq |A|_m^\Model \text{ and } |A|_m^\Model \neq H_m\]

\paragraph{Optimal actions.}
To speak of an agent's obligations, we will need to speak of `optimal actions', those actions that bring about an ideal state of affairs.
We make the simplifying assumption that all agents in the system collaborate to achieve a common goal.
This is consistent with the RSS assumption that all agents are following the same rules to avoid collisions anywhere in traffic.
Let $\Value: H_{root}\rightarrow \Re$ be a \textit{value function} that maps histories of $\Model$ to \textit{utility values} from the real line $\Re$.
This value represents the utility associated by all the agents to this common history.

\begin{definition}
	\label{def:utilitarian stit frame}
	A \emph{utilitarian stit frame} is a tuple 
	$(\Tree, <, \Agent, $
	$\Choicemap, \Value)$ where 
	\textit{Tree} and $<$ are as in branching time frames, 
	$\Agent$ is a set of agents, 
	$\Choicemap$ is a choice mapping (which is specialized as $\Choiceam$ for each agent and moment),
	and $\Value$ is a value function.
	A \emph{utilitarian stit model} is a model based on a utilitarian stit frame.
	If $\Choiceam$ is finite for every $\alpha\in \Agent$ and $m$, the model is said to be \emph{finite-choice}.
\end{definition}
All models in what follows are finite-choice utilitarian stit models. 
Given two sets of histories $X$ and $Y$ , we order them as 
\begin{equation}
\label{eq:proposition dominance}
X\leq Y \text{ iff } \Value(h) \leq \Value(h') \quad \forall~h\in X, h' \in Y
\end{equation}
%
Let 
$\State_{\alpha}^m \defeq \Choice{\Agent \setminus \{\alpha\}}{m}$
be the set of \textit{background states} against which $\alpha$'s decisions are to be evaluated.
These are other agents' independent actions.
Given two actions $K,K'$ in $\Choiceam$, 
$K\preceq K' \text{ iff } K \cap S \leq K'\cap S \textrm{ for all } S \in \State_{\alpha}^m$.
That is, $K'$ dominates $K$ iff it is preferable to it regardless of what the other agents do (known as \textit{sure-thing reasoning}).
Strict inequalities are naturally defined.
Optimal actions are given by \cite{Horty01DLAgency}
\begin{equation}
\label{eq:optimalam}
\Optimalam \defeq \{K \in \Choiceam \such \not\exists K' \in \Choiceam~.~K\prec K'\}
\end{equation}
$\Optimalam$ is non-empty in finite-choice utilitarian stit models~\cite[Thm. 4.10]{Horty01DLAgency}.

\paragraph{Dominance Ought}
Intuitively we will want to say that at moment $m$, agent $\alpha$ \textit{ought to see to it} that $A$ iff $A$ is a necessary condition of all the histories considered ideal at moment $m$.
This is formalized in the following \textit{dominance Ought operator}, which is pronounced ``$\alpha$ ought to see to it that $A$ holds''.
\begin{definition}[Dominance ought]
	\label{def:ought semantics}
With $\alpha$ an agent and $A$ an obligation in a model $\Model$, 
\[\Model,m/h \models \Ostit{\alpha}{A} \text{ iff } K \subseteq |A|_m^{\Model}~~\text{ for all } K \in \Optimalam\]
\end{definition}
See Fig.~\ref{fig:choices} for examples.
If $K \subseteq |A|_m$ we say that $K$ guarantees $A$.
Note that the dominance Ought is only defined with the $cstit$ operator and not $dstit$; this is because it leads to a simpler logic.
The dominance ought satisfies a number of pleasing logical properties; we refer the reader to~\cite[Ch. 4]{Horty01DLAgency}.

\paragraph{Conditional obligation}
It is often necessary to say that an obligation is imposed only under certain conditions. 
Where $A$ and $B$ are obligations, the statement
\begin{equation}
\label{eq:conditional ought expression}
\Model, m/h \models \OstitC{\alpha}{A}{B}
\end{equation}
expresses that $\alpha$ ought to see to it that $A$, under the condition that $B$ holds.
\begin{definition}[Conditional ought]
	\label{def:conditional ought semantics}
With $\alpha$ an agent and $A$, $B$ as obligations in a model $\Model$, 
\[\Model,m/h \models \OstitC{\alpha}{A}{B} \text{ iff }\]
\[K \subseteq |A|_m^{\Model}~~\text{ for all } K \in \Optimalam /|B|_m^{\Model}\]
\end{definition} 
where \(\Optimalam /B\) ($\alpha$'s optimal actions under the condition $B$) is the set of actions available to $\alpha$ that are optimal if we ignore $B$-violating histories \cite{Horty01DLAgency}.

We note that conditional obligation is \textit{not} the same as $B \implies \Ostit{\alpha}{A}$.\footnote{This is not a well-formed DAU expression, but we can extend the logic to give this expression its natural definition as $\neg B \lor \Ostit{\alpha}{A}$.}
Conditional obligation only considers $B$-guaranteeing dominating histories, while this latter formula still considers all optimal actions, not only those that guarantee the truth of $B$.

\paragraph{Syntax}
We now summarize the syntax of DAU statements.
Obligations are generated as follows. 
\[A::=\formula~|~[\alpha dstit:A]~|~\neg A\] 
where $\formula \in$ \CTLs, and the semantics of $~[\alpha dstit:A]$ were given in Def.~\ref{def:dstit}.
Ought statements are in one of two forms:
\[\Ostit{\alpha}{A}~\text{or}~\OstitC{\alpha}{A}{B}\] 
where $\alpha$ is an agent and $A$ and $B$ are obligations. 
The semantics were given in Def. \ref{def:ought semantics}.
\subsection{Alethic Logic vs DAU for Analyzing AV Behavior}
\label{sec:analysis alethic vs deontic}
\begin{figure*}[t]
	\centering
	\includegraphics[height=2.25in,keepaspectratio]{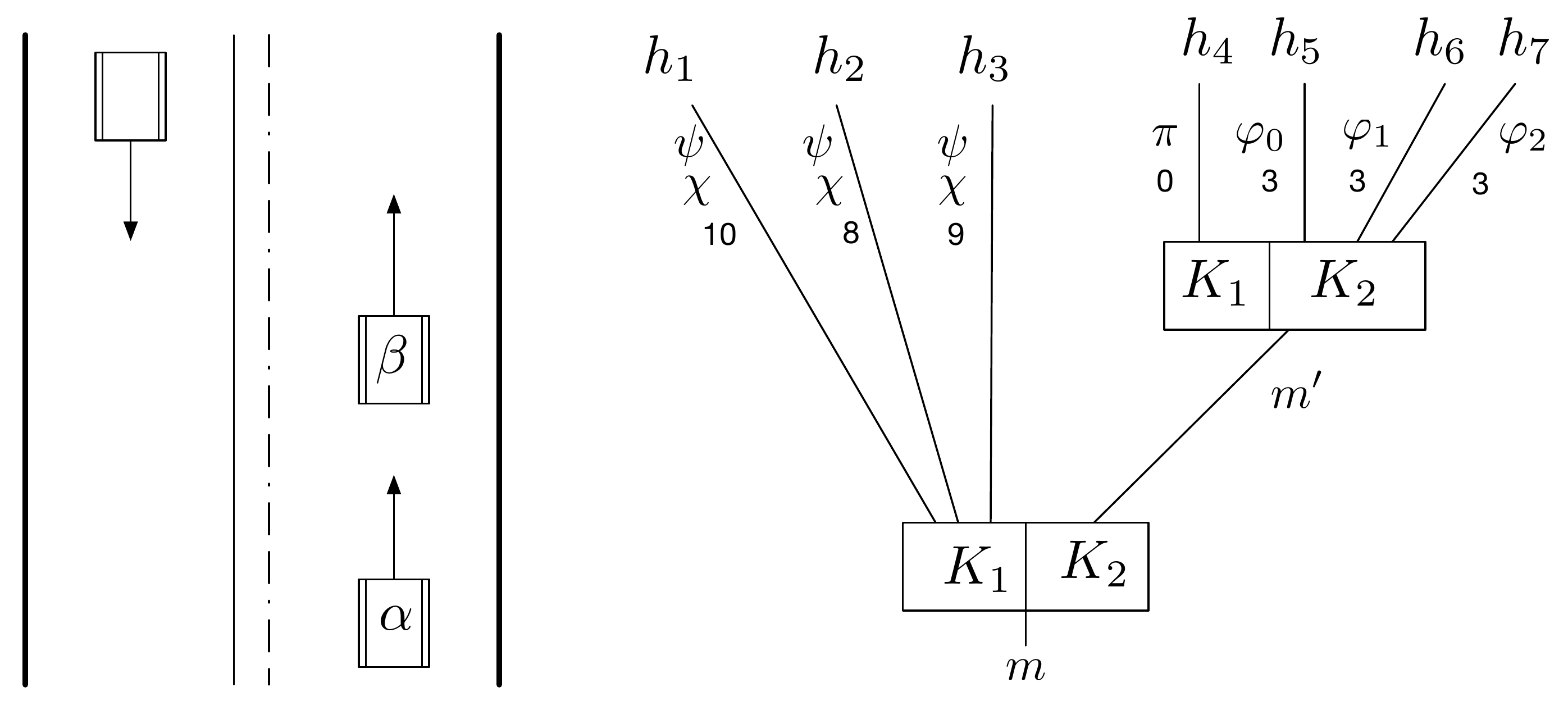}
	\caption{\small Deriving obligations for $\alpha$ from the stit model. $K_1$: stay in lane, $K_2$: change lanes.}
	\label{fig:tree opposite}
\end{figure*}
We now offer an AV-specific example of the advantage that a DAU formalization offers over pure temporal logic.
Specifically, DAU allows deriving obligations over time by construction and in a uniform manner;
attempts to do so using pure temporal logic are unsatisfactory.
Consider the stit model in Fig.~\ref{fig:tree opposite}, which models the situation on the left: agent $\alpha$ could either stay in its lane behind the slower $\beta$ ($K_1$), or pass $\beta$ by going into the opposite lane ($K_2$) and risk a head-on collision.
Every history in $K_1$ is deemed preferable to every history in $K_2$ because $K_1$ eliminates the risk of collision, so we assign history values accordingly, as shown.
If the agent does $K_2$, then it needs to get back into its lane.
Thus at $m'$, every history in $K_2$ is preferable to every history in $K_1$, and this is reflected in the values.
Naturally, the histories in $K_1$ at $m$ satisfy $\psi\defeq \forall (\neg p)$ ($\alpha$ does not pass, i.e., does not change lanes), 
those in $K_1$ at $m'$ satisfy $\pi = \forall \eventually \texttt{Collision}$ (since $\alpha$ remains in the opposite lane in this case),
and those in $K_2$ at $m'$ satisfy $\varphi_t \defeq \eventually_{[0:t]} p $, $t=0,1,2$, which says that $\alpha$
changes lanes in at most $t$ time steps ($\eventually_{[n:m]}\formula\defeq \Next^n\formula\lor \Next^{n+1} \formula\ldots\lor \Next^m \formula$ and $\Next^{t}$ is $\Next$ repeated $t$ times).
Moreover, suppose $K_1$ histories at $m$ satisfy some arbitrary formula $\chi$.
The following obligations are then automatically derived from the stit model\footnote{In DAU, $\Ostit{\alpha}{\phi} \land \Ostit{\alpha}{\psi} \text{ is equivalent to } \Ostit{\alpha}{\phi \land \psi}$}:
\begin{eqnarray}
\text{At }m &,& \Ostit{\alpha}{\psi \land \chi} \label{eq:don't pass}
\\
\text{At }m' &,& \Ostit{\alpha}{\varphi_2}~(\text{since } \varphi_2 \text{ is true if } \varphi_0 \text{ or } \varphi_1 \text{ are}) \label{eq:pass}
\end{eqnarray}
Thus it emerges that at $m$, $\alpha$ ought to not change lanes. 
Also at $m$, $\alpha$ ought to see to it that $\chi$ - which may have nothing to do with how the values were assigned to the histories.
E.g., $\chi$ might constrain the motor's energy consumption; it is nonetheless an obligation because it's a necessary condition for achieving an optimal history.
If the agent violates \eqref{eq:don't pass} at $m$ by doing $K_2$, then automatically the model yields that its obligation at $m'$ is \eqref{eq:pass}.
As explained in the Introduction, such generation of new obligations is not possible in pure temporal logic, and would have to be added somewhat awkwardly to the atomic propositions or imposed from outside the logic.
For example, the agent might try to satisfy something implied by $\psi \land \varphi_2$, like $\exists (\eventually_{[1:2]}p)$ (i.e. there exists a path that satisfies $p$ within the next two states). 
However, at $m$ this is too permissive, since we really do prefer not changing lanes at all. 
And at $m'$ it is too restrictive, since $\varphi_0$ is a perfectly legitimate way of meeting $\alpha$'s obligations then.
Another method may be to specify behavior through reactive implications, e.g. "oncoming-traffic $\implies$ change lanes", but this sort of explicit rule must be built in by a human designer.
The conclusion is that there is a need to use a logic that captures preferences and derives obligations from them, as well as what agents are able and unable to do; a logic of agency \textit{and} obligation.

\section{Formalizing RSS in DAU}
\label{sec:rss in dau}
Responsibility-Sensitive Safety, or RSS, is a proposal put forth by Intel's Mobileye division~\cite{RSSv6}.
It proposes rules or requirements that, if followed by all cars in traffic, would lead to zero accidents.
Our objective here is to formalize some of the RSS rules in the language of Dominance Act Utilitarianism (DAU), and study their logical consequences.
Three important points must be made:
\begin{enumerate}[(A)]
	\item The formalization does not depend on the dynamical equations that govern the cars because we wish our conclusions to be independent of these lower-level concerns. 
	This is consistent with the standard AV control architecture where a logical planner decides what to do next (`change lanes' or `turn right') and a lower-level motion planner executes these decisions. 
	Our logical analysis concerns the logical planner.
	\label{note first}
	\item We are not trying to formalize general traffic laws or driving scenarios, which is outside the scope of this paper. We are only formalizing the RSS rules. 
	\item Every formalization, in any logic, can always be refined. We are not aiming for the most detailed formalization; we aim for a useful formalization.
	\label{note last}
\end{enumerate}
We have three objectives in doing so:
demonstrating the usefulness of DAU in a real use case; 
highlighting the ambiguities implicit in such proposals, which would go unnoticed without formalization;
and automating the checking of logical consistency and deriving of conclusions.
We first present the RSS rules in natural language (Section~\ref{sec:rss rules}), then their formalization (Section~\ref{sec:rss formalization}), and finally we analyze the rules' logical consequences. 

\subsection{The RSS rules}
\label{sec:rss rules}
The rules for Responsibility-Sensitive Safety are~\cite{RSSv6}:
\begin{enumerate}[RSS1.]
	\item Do not hit someone from behind. \label{rss-behind}
	\item Do not cut-in (to a neighboring lane) recklessly. \label{rss-cutin}
	\item Right-of-way is given, not taken. \label{rss-row}
	\item Be careful of areas with limited visibility. \label{rss-vis}
	\item If you can avoid an accident without causing another one, you must do it. \label{rss-avoid}
	\item To change lanes, you should not wait forever for a perfect gap: i.e., you should not wait for a gap large enough to get into even when the other car, already in the lane, maintains its current motion.\label{rss-assertive}
\end{enumerate}

RSS\ref{rss-assertive} is derived directly from the following in~\cite[Section 3]{RSSv6}: ``the interpretation [of the duty-of-care law] should lead to [...] an agile driving policy rather than an overly-defensive driving which inevitably would confuse other human drivers and will block traffic [...].
As an example of a valid, but not useful, interpretation is to assume that in order to be ``careful'' our
actions should not affect other road users. Meaning, if we want to change lane we should find a gap large enough such
that if other road users continue their own motion uninterrupted we could still squeeze-in without a collision. Clearly,
for most societies this interpretation is over-cautious and will lead the AV to block traffic and be non-useful.''
\textit{Note that, consistently with points (\ref{note first})-(\ref{note last}) above, this is stated without any reference to dynamics or specific scenarios. }
The RSS authors are concerned that overlay cautious driving might lead to unnatural traffic, so RSS aims to allow cars to move a bit assertively, and defines correct reactions to that.

%
We will not study RSS\ref{rss-vis} and \ref{rss-avoid} as they are currently too vague for formalization.

\subsection{Formalization of RSS Rules}
\label{sec:rss formalization}
\noindent\textbf{Formalizing RSS\ref{rss-behind}}. 
Let $\formula$ be a formula denoting `Hit someone from behind'.
A plausible formalization of RSS\ref{rss-behind} is then
\[RSS\ref{rss-behind}.\,\, \Ostit{\alpha}{\neg \formula}\]
That is, $\alpha$ ought to see to it that it does not hit anyone from behind.
However, suppose that $\alpha$ finds itself, through no fault of its own, in  a situation where a collision is unavoidable at time $m$, that is, $H_m = |\formula|_m^\Model$.
Then we can show that $RSS$\ref{rss-behind} cannot be met. 
This is something we know at design time.
There isn't much value in specifying obligations that remain in force even when they become impossible to meet, since we can't design controllers for them.
A better formalization of RSS\ref{rss-behind} would \textit{automatically, as a matter of logic}, remove the obligation when a collision becomes unavoidable.
This can be done using $dstit$ of Def.~\ref{def:dstit} as follows:
\[RSS\ref{rss-behind}r.\,\, \Ostit{\alpha}{\neg [\alpha \, dstit\!:\formula]}\]
This says that $\alpha$ should see to it that it does \textit{not} deliberately ensure an accident $\formula$.
This form of obligation is called \textit{refraining}: in this case, $\alpha$ \textit{refrains} from hitting anyone from behind.
$RSS\ref{rss-behind}$ and $RSS\ref{rss-behind}r$ \textit{are not logically equivalent}.
If $|\formula|_m^\Model = H_m$, then $\dstit{\alpha}{\formula}$ is necessarily false, and $RSS\ref{rss-behind}r$ is trivially satisfied since $\Ostit{\alpha}{\top}$ is a theorem of DAU.
Thus $RSS\!\ref{rss-behind}r$ does not impose unrealistic obligations on the agent.
Of course, a test engineer should then examine why the inevitable situation arose in the first place - but that is a separate debugging effort. 
The control engineer can now focus on designing a controller that meets the more realistic $RSS\!\ref{rss-behind}r$.
\\

\noindent\textbf{Formalizing RSS\ref{rss-cutin}}. 
Define two \CTLs~formulas, $\psi: $ a non-reckless cut-in, and $\psi_r$: a reckless cut-in.
Then RSS\ref{rss-cutin} is formalizable as 

\[RSS\ref{rss-cutin}. \Ostit{\alpha}{\forall \always (\psi\lor \psi_r \implies \neg \psi_r)}.\]

That is, $\alpha$ should see to it that always, if a cut-in happens, then it is a non-reckless cut-in.
\\

\noindent\textbf{Formalizing RSS\ref{rss-row}}.
Formalizing this rule requires some care. 
First, note that RSS\ref{rss-row} should probably be amended to say that `Right-of-way is given, not taken, \emph{and some car is given the right-of-way}' - otherwise, traffic comes to a standstill.
We will first focus on formalizing the prohibition (nobody should take the r-o-w), then we will formalize the positive obligation (somebody must be given it).

%
Let $\Agent = \{\alpha,\beta,\gamma,\ldots\}$ be a finite set of agents.
Define the atomic propositions
$GROW_\beta^\alpha$: $\beta$ gives right-of-way to $\alpha$ and
$p_\alpha$: $\alpha$ proceeds/drives through the conflict region.
Then
$TROW_\alpha \defeq p_\alpha \land \neg (GROW_\beta^\alpha \land GROW_\gamma^\alpha \land \ldots)$
formalizes taking the r-o-w: $\alpha$ proceeds without being given the right of way by everybody.
 We could now express the prohibition in RSS\ref{rss-row}: every $\alpha$ ought to see to it that it does not take the r-o-w:
 \begin{equation}
 \label{eq:rss-row prohib}
 RSS\ref{rss-row}prohib0. \, \bigwedge_{\alpha \in \Agent} \Ostit{\alpha}{\neg TROW_\alpha}
 \end{equation}

The difficulty with this formulation is that it could lead to $\alpha$ being obliged to force \textit{everybody else} to give it the r-o-w - something over which, a priori, it has no control. 
To see this, we need the following, whose proof is omitted due to lack of space. 
\begin{theorem}
	\label{thm:force others}
	Given obligations $A$ and $B$, 	$\Ostit{\alpha}{A \lor B} \land (\forall \neg A) \implies \Ostit{\alpha}{B}$
\end{theorem}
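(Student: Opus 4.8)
The plan is to prove the statement as a semantic validity: fix an arbitrary model $\Model$ and pair $m/h$, assume the left-hand side holds at $m/h$, and derive $\Ostit{\alpha}{B}$ there. First I would unfold the two conjuncts of the hypothesis through their definitions. By Def.~\ref{def:ought semantics}, $m/h \models \Ostit{\alpha}{A \lor B}$ means $K \subseteq |A \lor B|_m^\Model$ for every $K \in \Optimalam$. The second conjunct $\forall \neg A$ is a \CTLs~state formula at $m$, so it holds iff every history through $m$ satisfies $\neg A$ at $m$; by Eq.~\eqref{eq:Amm2} this is exactly the statement $|A|_m^\Model = \emptyset$.

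The key step is then a purely set-theoretic reduction at the fixed moment $m$. Since the satisfaction relation distributes over disjunction, $|A \lor B|_m^\Model = |A|_m^\Model \cup |B|_m^\Model$. Substituting $|A|_m^\Model = \emptyset$ gives $|A \lor B|_m^\Model = |B|_m^\Model$. Feeding this back into the unfolded first conjunct yields $K \subseteq |B|_m^\Model$ for every $K \in \Optimalam$, which is precisely the definition of $m/h \models \Ostit{\alpha}{B}$. Since $m/h$ was arbitrary, the implication is valid.

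The only real care needed --- and the step I expect to be the main (if minor) obstacle --- is justifying the semantic reading of the antecedent $\forall \neg A$ as $|A|_m^\Model = \emptyset$, since $A$ is a general obligation (possibly of $dstit$ form) rather than a bare atomic proposition; one must check that the uniform $m/h \models$ relation adopted here still makes $\forall \neg A$ quantify exactly over $H_m$ and evaluate $\neg A$ at each $m/h'$. Once that reading and the disjunction identity $|A \lor B|_m^\Model = |A|_m^\Model \cup |B|_m^\Model$ are in hand, no appeal to the dominance ordering or to the internal structure of $\Optimalam$ is needed, so the argument goes through even without invoking nonemptiness of $\Optimalam$~\cite[Thm. 4.10]{Horty01DLAgency}.
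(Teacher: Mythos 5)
Your argument is correct: unfolding Def.~\ref{def:ought semantics}, reading $\forall\neg A$ as $|A|_m^\Model=\emptyset$ under the uniform $m/h$ satisfaction relation, and using $|A\lor B|_m^\Model=|A|_m^\Model\cup|B|_m^\Model$ immediately gives $K\subseteq|B|_m^\Model$ for every optimal $K$, and your observation that nonemptiness of $\Optimalam$ is never needed is also right. The paper omits its proof of Theorem~\ref{thm:force others} for space, but this set-theoretic reduction is the evident intended argument, so there is nothing to contrast.
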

In other words, if $\alpha$ has an obligation to fulfill $A$ or $B$ at $m/h$, but every available history violates $A$ ($\forall \neg A$), then its obligation is effectively to fulfill $B$.
Applied to Eq.~\eqref{eq:rss-row prohib} with $A=\neg p_\alpha$ and $B=\land_{\beta\neq \alpha}GROW_\beta^\alpha$, Thm.~\ref{thm:force others} says that if $\alpha$ is in a situation where it has no choice but to proceed (e.g. as a result of slippage on a wet road, say), then its obligation is to see to it that everybody else gives it the right-of-way, which is unreasonable.

To remedy this, we first formalize the positive obligation: somebody must be given the right-of-way.
This seems to be a \textit{group obligation}: the \textit{group} must give r-o-w to one of its members.
Group obligations are formally defined in \cite[Ch. 6]{Horty01DLAgency}. 
Therefore, we define an atomic proposition $g_\alpha$: r-o-w is Granted to $\alpha$.
Then we formalize
 \begin{equation}
 \label{eq:rss-row pos}
 RSS\ref{rss-row}pos. \, \Ostit{\Agent}{\exists \lor_{\alpha \in \Agent} g_\alpha}
 \end{equation}
This says the group $\Agent$ has an obligation to give r-o-w to someone, and the only choice is in \textit{who} gets it.
We now come back to formalizing the prohibition:
  \begin{equation}
  \label{eq:rss-row prohib 2}
  RSS\ref{rss-row}prohib. \, \bigwedge_{\alpha \in \Agent} \Ostit{\alpha}{\always (\neg g_\alpha \implies \neg p_\alpha)}
  \end{equation}
  
Finally, we formalize $RSS\ref{rss-row}$ as the conjunction $RSS\ref{rss-row}prohib \,\land\, RSS\ref{rss-row}pos$.
\\

\noindent\textbf{Formalizing RSS\ref{rss-assertive}}.
This rule says that if the car wants to change lanes, it shouldn't wait for the perfect gap (otherwise, traffic is stalled).
First, let's formalize `waiting for the perfect gap', that is, waiting until the other car, already in the lane, gives the AV the right-of-way (e.g., by slowing down).
Let the atomic proposition $w_\alpha$ mean `$\alpha$ wants to change lanes' and recall that $p_\alpha$ means `$\alpha$ proceeds through the conflict region' while $g_\alpha$ means `$\alpha$ is Granted the right-of-way'.
For conciseness, let's introduce the bounded Release operator $\LTLbrelease_N$, which informally says that over the next $N$ steps, either $\psi$ does not hold at all, or it does and $\phi$ holds continuously until $\psi$ holds. 
\[\psi \LTLbrelease \phi = \psi \lor (\phi \land \Next \psi) \lor (\phi \land \Next \phi \land \Next^2 \psi) \lor \ldots\]
\[\ldots \lor (\phi \land \Next \phi \land \ldots \land  \Next^{N-1}\phi \land \Next^{N} \psi) \lor (\phi \land \Next \phi \land \ldots \Next^N \phi) \]
%
Then $\neg p_\alpha \LTLbrelease g_\alpha$ says that $\alpha$ waits for the perfect gap up to $N$ time steps (but we don't know what happens after this).
$\dstit{\alpha}{\neg p_\alpha \LTLbrelease g_\alpha}$ formalizes the agent deliberately seeing to it that it waits to be given the right-of-way, when it doesn't have to.
Finally,
\begin{equation}
\label{eq:rss-assertive}
RSS\!\ref{rss-assertive}.\quad \OstitC{\alpha}{\neg[\alpha~dstit:\neg p_\alpha \wbuntil g_\alpha]}{w_\alpha}
\end{equation}
formalizes that $\alpha$ ought to refrain from seeing to it that it waits for the right-of-way given that it wants to change lanes.
This obligation does not delay the lane change - in particular, it does not require the car to wait for the perfect gap.
It also does not rush $\alpha$: it can wait if it wishes to.
We emphasize that RSS assertive driving \textit{requires} that an AV sometimes force its way, as expressed in~\eqref{eq:rss-assertive}.
\subsection{Application: Inferring stit model structure}
\label{sec:analysis inferring policies}
%
\newcommand{\rssrow}{RSS\ref{rss-row}prohib}
\newcommand{\rssass}{RSS\!\ref{rss-assertive}}

In DAU, obligations are automatically derived from the stit model via Def.~\ref{def:cstit}.
Given an obligation that we \textit{want} the system to have, how should we structure the stit model so that it has that obligation?
This is similar to synthesis-from-specifications, an active research area in programming and in Cyber-Physical Systems.
This section gives an example where it is possible to manually partially infer the stit model structure from the RSS obligations.

Consider again the $\rssrow$ and $\rssass$ statements (Eqs.~\eqref{eq:rss-row prohib 2} and \eqref{eq:rss-assertive}).
 \begin{figure}[t]
 	\centering
 	\includegraphics[height=2.25in,keepaspectratio]{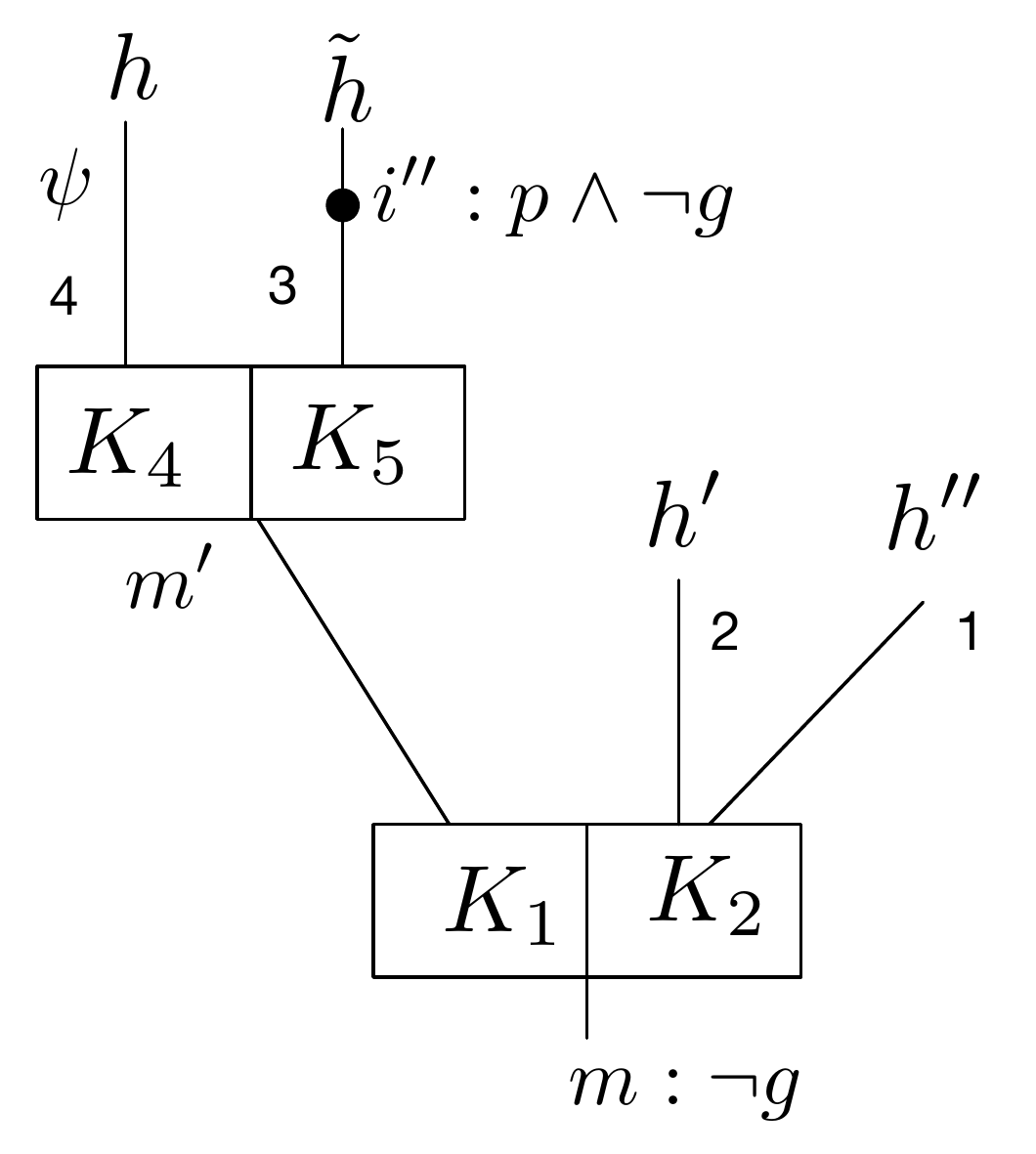}
 	\caption{The optimal action at $m$, $K_1$, necessarily contains history $\tilde{h}$ in which comes a moment $i''$ s.t. $i''/\tilde{h} \models p\land \neg g$. The controller must choose an action, prior to $i''$, that does not contain $\tilde{h}$. Since the controller always chooses optimal actions, the $\Value$ function must favor $K_4$, as shown.}
 	\label{fig:derive policies}
 \end{figure}
 \begin{proposition}
 	\label{prop:stit model inference}
 	A stit model has both obligations $\rssrow$ and $\rssass$ at $m$ if for every optimal action $K \in \Optimalam$, it holds that $|K|\geq 2$, and there exist a history $\tilde{h} \in K$ and a moment $m'>m$ in $\tilde{h}$ s.t.  $|\Choice{\alpha}{m'}|\geq 2$, $m/\tilde{h} \not\models_{\text{\CTLs}} \always(\neg g_\alpha \implies \neg p_\alpha)$ and $\tilde{h}$ is not in any optimal action at $m'$.
 \end{proposition}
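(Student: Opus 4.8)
The plan is to verify the two obligations separately, by unfolding Definitions~\ref{def:ought semantics}, \ref{def:conditional ought semantics}, and~\ref{def:dstit}, and then to show that the apparent tension between them---$\rssass$ forces each optimal action at $m$ to contain an \emph{assertive} (right-of-way--taking) history, whereas $\rssrow$ forbids taking the right-of-way---is resolved precisely by the requirement that such a history $\tilde h$ drop out of every optimal action at the later moment $m'$. I treat $\rssass$ by the static dominance semantics at $m$, and $\rssrow$ via the fact that the controller always selects optimal actions and re-derives its obligations at each moment (Fig.~\ref{fig:derive policies}).

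First I would establish $\rssass$. Fix any $K\in\Optimalam$ (for the conditional obligation one restricts to $\Optimalam/|w_\alpha|_m$, to which the same reasoning applies). By hypothesis $K$ contains a history $\tilde h$ with $\tilde h\notin|\always(\neg g_\alpha\implies\neg p_\alpha)|_m$, so along $\tilde h$ there is a moment at which $p_\alpha\wedge\neg g_\alpha$ holds, i.e. $\alpha$ proceeds through the conflict region while not (yet) granted the right-of-way. By expanding $\wbuntil$---or directly from its stated reading that $\alpha$ refrains from proceeding until it is granted the right-of-way---such a proceeding-before-grant history cannot satisfy $\neg p_\alpha\,\wbuntil\,g_\alpha$, so $\tilde h\notin|\neg p_\alpha\,\wbuntil\,g_\alpha|_m$ and hence $K\not\subseteq|\neg p_\alpha\,\wbuntil\,g_\alpha|_m$. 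Since $\Choiceam(h)=K$ for every $h\in K$, the first conjunct of Definition~\ref{def:dstit} fails at each $h\in K$, giving $m/h\models\neg[\alpha~dstit:\neg p_\alpha\,\wbuntil\,g_\alpha]$ throughout $K$, i.e. $K\subseteq|\neg[\alpha~dstit:\neg p_\alpha\,\wbuntil\,g_\alpha]|_m$. As $K$ was an arbitrary (conditionally) optimal action, Definition~\ref{def:conditional ought semantics} yields $\rssass$ at $m$.

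For $\rssrow$ the naive reading fails: $\Ostit{\alpha}{\always(\neg g_\alpha\implies\neg p_\alpha)}$ asks every history in every $K\in\Optimalam$ to satisfy $\always(\neg g_\alpha\implies\neg p_\alpha)$, yet the very history $\tilde h$ used above violates it. The resolution, which is the heart of the argument, is that the controller chooses optimal actions at every moment. Because $\tilde h$ belongs to no optimal action at $m'$ and $|\Choice{\alpha}{m'}|\geq 2$, there is an optimal action at $m'$ that excludes $\tilde h$; the offending moment on $\tilde h$ (denoted $i''$ in Fig.~\ref{fig:derive policies}), lying no earlier than $m'$, is therefore never realized under optimal play, while $|K|\geq 2$ guarantees that $K$ retains histories other than $\tilde h$ along which $\alpha$ can continue. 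Thus $\rssrow$ holds in the operative sense that no optimally-chosen trajectory ever takes the right-of-way.

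The main obstacle is exactly this reconciliation: making precise in what sense the model ``has'' $\rssrow$ when a rule-violating history sits inside an optimal action at $m$. The cleanest route is to read the proposition dynamically---obligations are recomputed from the stit model at each moment, so that $\rssass$ at $m$ and the pruning of $\tilde h$ at $m'$ are mutually consistent exactly under the stated cardinality conditions---and to exhibit, as in the figure, a $\Value$ function that makes $K$ optimal at $m$ while the $\tilde h$-free action is optimal at $m'$. I expect the bookkeeping of which histories survive each choice, together with confirming non-emptiness of the relevant optimal sets via~\cite[Thm. 4.10]{Horty01DLAgency}, to be the only delicate points.
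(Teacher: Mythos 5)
The paper omits its own proof of this proposition, so the only thing to measure your argument against is the authors' surrounding discussion (the caption of Fig.~\ref{fig:derive policies} and the paragraph after the statement). Your reading --- every optimal action at $m$ must contain an assertive history $\tilde h$, and $\tilde h$ is then pruned at $m'$ by making it non-optimal there, so the offending moment $i''$ is never reached under optimal play --- is exactly the reading the authors sketch, and your handling of $\rssass$ via Definition~\ref{def:dstit} (the optimal $K$ fails to guarantee $\neg p_\alpha\,\wbuntil\,g_\alpha$, so every $h\in K$ satisfies the negated $dstit$) is the natural one. Still, there are two genuine gaps. The first is in the $\rssass$ step: you infer from $m/\tilde h\not\models\always(\neg g_\alpha\implies\neg p_\alpha)$ that $\tilde h\notin|\neg p_\alpha\,\wbuntil\,g_\alpha|_m$. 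That does not follow from the definition of the bounded release: if $g_\alpha$ is granted at some position $j\le N$ with $\neg p_\alpha$ holding up to it, the formula is satisfied at $m/\tilde h$ no matter what happens at a later moment where $p_\alpha\land\neg g_\alpha$ holds; likewise a violation occurring only beyond the horizon $N$ leaves the final disjunct intact. You need to strengthen the hypothesis on $\tilde h$ (violation within $N$ steps and before any grant) or invoke the escape clause $|\neg p_\alpha\,\wbuntil\,g_\alpha|_m=H_m$; as written, $K\not\subseteq|\neg p_\alpha\,\wbuntil\,g_\alpha|_m$ is asserted, not proved. (The transfer from $\Optimalam$ to $\Optimalam/|w_\alpha|_m$ is also waved at rather than argued.)

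The second, deeper problem is one you half-acknowledge: your ``dynamic'' reading of $\rssrow$ is not a proof of the stated claim under Definition~\ref{def:ought semantics}. That definition is unambiguous: $\Ostit{\alpha}{\always(\neg g_\alpha\implies\neg p_\alpha)}$ holds at $m$ iff $K\subseteq|\always(\neg g_\alpha\implies\neg p_\alpha)|_m$ for \emph{every} $K\in\Optimalam$, and the proposition's own hypothesis places a history $\tilde h\in K$ outside that set for every optimal $K$. So under the paper's semantics the hypotheses refute $\rssrow$ at $m$; what happens at $m'$ cannot repair a truth value already fixed at $m$. You correctly observe that ``the naive reading fails,'' but then substitute an undefined satisfaction relation (obligations recomputed along optimally-chosen trajectories) --- that is a change of statement, not a proof. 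To close this you would have to either move $\always$ outside the ought and prove $\Ostit{\alpha}{\neg g_\alpha\implies\neg p_\alpha}$ at every moment reachable by optimal actions (for which the $m'$-pruning condition is precisely what is needed), or explicitly reformulate the proposition for that weaker obligation. As it stands, both your argument and, apparently, the paper's informal justification rest on this unstated reinterpretation; flagging it is to your credit, but it remains a gap.
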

The proof is omitted due to lack of space. 
The conclusion of the Proposition, illustrated in Fig.~\ref{fig:derive policies}, is counter-intuitive: it necessitates the existence of a history $\tilde{h}$ along which one of the formulas, $\always(\neg g_\alpha \implies \neg p_\alpha)$, is \textit{violated}.
But since the inferred structure places $\tilde{h}$ in a non-optimal action (via $\Value$), this doesn't lead to an \textit{obligation violation}.

\subsection{Application: undesirable consequence of RSS star-calculations}
\label{sec:other accidents}
One of the main tenets of RSS is that an AV is only responsible for avoiding potential accidents between itself and other cars (so-called `star calculations'); interactions between 2 other cars are not its concern~\cite[Remarks 1 and 8]{RSSv6}.
Yet everyday driving experience makes clear that our actions can be faulted for at least \textit{facilitating} an accident: e.g., by repeated braking, I may cause the car behind me to do the same, leading the car behind \textit{it} to rear-end it.
Or I might make a sudden lane change over two lanes, causing the car in the lane next to me to over-react when I speed past it, and collide with someone else.
We now show how this intuition is automatically captured by the DAU logic, and that RSS star-calculations lead to undesirable behavior of the AV.

Let $\formula \in $ \CTLs~denote a formula expressing ``Accident between two other cars'', and assume the accident is such that $\alpha$ can facilitate it as in the above 2 examples.
Then $\dstit{\alpha}{\formula}$ says that $\alpha$ (deliberately) sees to it that the accident happens even though it could avoid doing so;
given what we assumed about this accident, this means $\alpha$ facilitates the accident.
Then $\dstit{\alpha}{\neg \dstit{\alpha}{\formula}}$ expresses that $\alpha$ sees to it that it does \textit{not} facilitate the accident: this is a form of refraining.
Finally,
$\dstit{\alpha}{\neg \dstit{\alpha}{\neg \dstit{\alpha}{\formula}}}$ says that $\alpha$ refrains from refraining, that is, $\alpha$ does not refrain from facilitating the accident (even though it could). 
The RSS position is that it is OK for $\alpha$ to refrain from refraining~\cite[Remarks 1 and 8]{RSSv6}, as formalized here.

However, refraining from refraining is the same as doing. Formally~\cite[2.3.3.]{Horty01DLAgency}
\[ [\alpha \,dstit: \neg [\alpha\,dstit: \neg [\alpha\,dstit:\formula]]] \equiv \dstit{\alpha}{\formula}\]
And we argue that this matches our intuition: to not refrain from facilitating an accident even though one could is the same as facilitating it.
In other words, under this formalization, the RSS position is tantamount to allowing AVs to facilitate accidents between others - clearly, an undesirable conclusion.
This aspect of RSS, therefore, needs refinement to take into account longer-range interactions between traffic participants.

\section{System Design and Model Checking DAU Obligations}
\label{sec:model checking}
The system designer's job is to design a system that has the \textit{right} obligations;
it is then the control engineer's job to design a controller that makes the system meet these obligations.
In DAU, obligations are automatically derived from stit models/trees, but designers usually model an agent as an automaton or a similar structure.
The question then naturally poses itself: given an agent model, how do we verify whether it has a given obligation?
Answering this question is a crucial design step: there is no point designing controllers that meet the wrong obligations. 
%
This can be cast as a model-checking question, which this section tackles.
All proofs are in the appendices.
\begin{figure*}[t]
	\centering
	\includegraphics[height=2.25in,keepaspectratio]{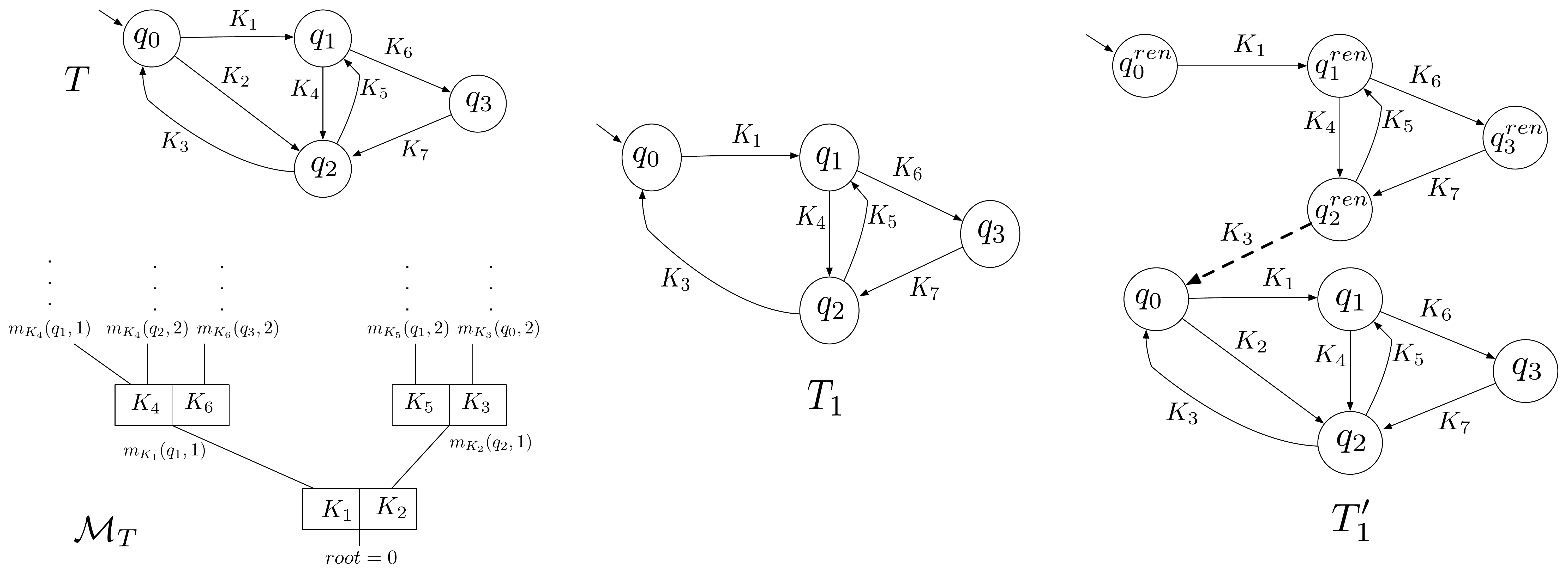}
	\caption{Left: a stit model generated by executing the stit automaton $T$ (transition weights not shown). Center and right: Automata $T_n$ and $T_n'$ used in Algorithm~\ref{algo:mc}. $T_1$ only has $K_1$ as first action, and $T_1'$ is obtained by re-naming states of $T_1$ and `adding’ a copy of $T_1$ to it. Executions of $T_1'$ are simply the execution of $T$ that start with $K_1$.}
	\label{fig:algo}
\end{figure*}

\subsection{Modeling an agent}
\label{sec:modeling an agent}
\begin{definition}[Stit automaton]
	\label{def:weighted automaton}
	Let $AP$ be a finite set of atomic propositions.
	A \emph{stit automaton} $T$ is a tuple $T = (Q, q_0, \Act, F, \Transrel, L, w,\lambda)$, where
	$Q$ is a finite set of states,
	$q_0$ is the initial state,
	$\Act$ is a finite set of actions ($\Act \subset 2^{H_{root}})$), 
	$F \subset Q$ is a set of final states,
	$\Transrel \subset Q\times \Act \times Q$ is a finite transition relation such that if $(q,K,q')$ and $(q,K',q')$ are in $\Transrel$ then $K=K'$,
	$L:Q\rightarrow 2^{AP}$ is a labeling function,
	$w: \Transrel \rightarrow \Re$ is a weight function,
	and $\lambda: \Re^\omega \rightarrow \Re$ is an accumulation function.
\end{definition}
Denote by $\Transrel(q) \subset \Transrel$ the set of outgoing transitions from $q$ ($\Transrel(q) = \{(q,K,q') \in \Transrel\}$), 
by $Post(q,K) = \{q' \such (q,K,q') \in \Transrel(q)\}$ the successors of $q$ under $K$,
and by $Post(q) = \cup_{K:(q,K,q') \in \Transrel(q)} $
$Post(q,K)$ all the successors of $q$.
Finally, we denote by $T.q_0$ the initial state of $T$ when there's a need to clarify the automaton.
Note that $T$ is a type of non-deterministic weighted automaton. 
Its unweighted counterpart $T^u$ is a classical transition system; 
\textit{thus for a \CTLs~formula $\formula$, we could model-check whether $T^u\models \formula$.}
A set of agents is modeled by the product of all individual stit automata, which is itself a stit automaton.
(When taking the product, we must define how weights are combined and how to construct the product's accumulation function, which are application-specific considerations.)  
Therefore the rest of this section applies to stit automata, whether they model one or multiple agents. 
We will continue to refer to one agent $\alpha$ for simplicity.

\begin{algorithm}
	\DontPrintSemicolon
	\KwData{A stit automaton $T = (Q, q_0, \Act, F, \Transrel, L, w,\lambda)$,  an obligation $A$}
	\KwResult{$\Model_T,root/h \models \Ostit{\alpha}{A}$}
	Set $root=0$\;
	Set $\Choicear=\{K \in \Act \such (q_0,K,q')\in \Delta \text{ for some }q'\} =\{K_1,\ldots,K_m\}$ \;
	\tcp*[l]{First step: find optimal actions at $root$}
	\For{$1\leq n \leq m$}{
		\tcc*[l]{Construct automaton $T_n'$ s.t. every execution of $T_n'$ is an execution of $T$ starting with action $K_n$. See Fig.~\ref{fig:algo}.}
		Create automaton $T_n$ by deleting all transitions $(q_0,K,q')$ with $K\neq K_n$ \;
		Create a copy $T_n^{\text{ren}}$ of $T_n$\;
		Create the automaton $T_n'$ as a union of $T_n^{\text{ren}}$ and $T$, with every transition $(q,K,T_n^{\text{ren}}.q_0)$ in $T_n^{\text{ren}}$ replaced by a transition $(q,K,T.q_0)$\;
		\lnl{comp vn} Compute the max value, $\vnu$, and min value, $\vnl$, of any $T_n'$ strategy starting at $q_0$\;
	}
	\tcc*[l]{An interval $ [\vnl,\vnu] $ is \textit{un-dominated} if there is no other interval $[\vnl', \vnu']$, computed in the above for-loop, s.t. $\vnl' > \vnu$}
	\lnl{undomin}Find all un-dominated intervals $[\vnl,\vnu]$\;
	\lnl{line:optimalam}Set $\Optimalar = \{K_n \in \Choicear \such [\vnl,\vnu] \text{ is un-dominated}\}$\;
	\tcc*[l]{Second step: decide whether all actions $K$  in $\Optimalar$ guarantee $A$, i.e., $K\subseteq |A|_{root}$.}
	\lnl{line:for}\For{$K_n \in \Optimalar$}{
		\uIf{$A$ is a \CTLs~formula}
		{
			\tcc{Does every execution of $T$ starting with $K_n$ satisfy $A$?}
			Use \CTLs~model-checking to check whether $T_n' \models_{\text{\CTLs}}\forall A$\;
			\If{$T_n' \not\models_{\text{\CTLs}} \forall A$ \tcp*{Optimal action $K_n$ does not guarantee $A$}}
			{\lnl{line:false 1}return False}
		}
		\lnl{line:dstit pos} \uElseIf{$A = \dstit{\alpha}{\formula}$ with $\formula\in$ \CTLs}
		{
			\tcp*[l]{This is true iff $H_{root}=|\formula|_{root}$}
			\lnl{line:all phi}Model-check whether $T\models_{\text{\CTLs}}\forall \formula$\;
			\tcc*[l]{This is true iff $K_n$ guarantees $\formula$, is not equiv. to line~\ref{line:all phi}}
			Model-check whether $T_n'\models_{\text{\CTLs}}\forall \formula$\;
			\If{$T\models_{\text{\CTLs}}\forall \formula$ or $T_n'\not \models_{\text{\CTLs}}\forall \formula$}
			{					
				\lnl{line:false 2}return False
			}
		}
		\lnl{line:dstit neg}\Else{
			\tcc{Last case: $A = \neg \dstit{\alpha}{\formula}$ with $\formula\in$ \CTLs. Similar to previous case on line \ref{line:dstit pos} with obvious modifications} 
		}
	} 
	\lnl{line:true}Return True\;
	\caption{Model checking DAU.}
	\label{algo:mc}
\end{algorithm}

\paragraph{From automata to stit models.}
Let $S^\omega$ denote the set of infinite sequences $(a_i)_{i\in \Ne}$ with $a_i\in S$.
An \textit{execution} of a stit automaton $T$ is a sequence $\pi \in \Transrel^\omega$ of transitions of the form $\pi = (q_0,K_0,q_1)(q_1,K_1,q_2)\ldots$.
The corresponding sequence of actions $K_0,K_1,\ldots \in \Act^\omega$ is called a \textit{strategy}.
Because of non-determinism, a strategy can produce multiple executions.
%
%
An execution of the automaton generates a stit model in the natural way: starting in state $q_0$ the automaton takes an infinite sequence of actions from $\Act$, thus non-deterministically traversing an infinite number of transitions $e$ from $\Transrel$.
These sequences of transitions form the histories in the corresponding stit model, with every transition $e$ adding a moment to the histories. 
The value(s) of those histories are obtained by accumulating $w(e)$ along the traversed transitions using function $\lambda$.
See Fig.~\ref{fig:algo} for an example.
The formal construction and proof are in Appendix~\ref{apx: proof of Mt}.
\begin{theorem}
	\label{prop:Mt}
	The structure $\Model_T$ obtained by executing a stit automaton $T$ is a utilitarian stit model with finite $\Choiceam$ for every agent $\alpha$ and moment $m$.
\end{theorem}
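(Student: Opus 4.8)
The plan is to exhibit each component required by Definition~\ref{def:utilitarian stit frame}---the tree $(\Tree,<)$, the set of agents, the choice mapping $\Choicemap$, the value function $\Value$, and the valuation $v$ of Definition~\ref{def:frame and model}---directly from the data of $T$, and then check that each satisfies its defining axioms. First I would fix the \emph{moments} to be the finite consistent transition-prefixes of executions of $T$, i.e.\ the finite sequences $(q_0,K_0,q_1)\cdots(q_{k-1},K_{k-1},q_k)$ of transitions in $\Transrel$ beginning at $q_0$, with the empty prefix serving as $root$. The order $<$ is the strict prefix order. Irreflexivity and transitivity are immediate, and the non-backward-branching axiom (if $m_1<m_3$ and $m_2<m_3$ then $m_1,m_2$ are comparable) holds because any two prefixes of the common sequence $m_3$ are themselves prefix-comparable; uniqueness of $root$ and $root<m'$ for all $m'\neq root$ are clear. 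Histories are then the maximal chains, which are exactly the infinite executions $\pi\in\Transrel^\omega$, and $H_m$ is the set of infinite executions extending the prefix $m$ (so $h$ goes through $m$ iff $m$ is a prefix of $h$). Here I would record the standing assumption that every reachable state has an outgoing transition, so that every finite prefix extends to an infinite execution and every $H_m$ is non-empty.

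Next I would define the remaining data. Each moment $m$ ends in a well-defined terminal state $q_m$, and I set $v(m/h)\defeq L(q_m)$, so the valuation depends only on the state reached (a legitimate special case of the general $m/h$-valuation). The value of a history is $\Value(h)\defeq\lambda\big(w(e_0),w(e_1),\dots\big)$, the $\lambda$-accumulation of the weights of the transitions $e_i$ traversed along $\pi$; since $\lambda:\Re^\omega\to\Re$ this is well-defined on $H_{root}$. For the choice mapping, at a moment $m$ with terminal state $q_m$ I partition $H_m$ according to the action taken at $m$: the cell of $\Choiceam$ containing a history $h$ consists of all histories in $H_m$ whose first transition out of $m$ carries the same label $K$. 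The determinism clause on $\Transrel$ (if $(q,K,q')$ and $(q,K',q')$ are both present then $K=K'$) guarantees this label is well-defined, and $\Choiceam$ is in bijection with the finitely many actions labelling the outgoing transitions of $q_m$, which (as $\Act$ is finite) yields the required finiteness.

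The substantive step is to verify that $\Choicemap$ obeys the constraints of Appendix~\ref{apx:choice}. That it partitions $H_m$ is immediate, since each history takes exactly one first transition out of $m$. The key constraint, ``no choice between undivided histories,'' I would discharge by construction: if $h,h'\in H_m$ also share a later moment $m'>m$, then they share the prefix up to $m'$ and in particular take the same first transition out of $m$, hence lie in the same cell of $\Choiceam$. I expect this---together with ensuring histories are genuinely infinite---to be the main obstacle, since it is the one place where the tree structure and the choice structure must be shown to be mutually compatible. Assembling $(\Tree,<,\Agent,\Choicemap,\Value)$ with valuation $v$ then yields a utilitarian stit frame and hence model by Definition~\ref{def:utilitarian stit frame}, with $\Choiceam$ finite at every moment, which proves the claim. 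For the composite case where $T$ is a product of per-agent automata and $\Agent$ has several members, I would additionally recover each $\Choice{\alpha}{m}$ by projecting the product action onto $\alpha$'s component, and note that the product construction is precisely what underwrites the independence-of-agents constraint; for the single-agent reading used by Algorithm~\ref{algo:mc} this constraint is vacuous.
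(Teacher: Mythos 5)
Your proposal is correct and follows essentially the same route as the paper's proof: you build the same unfolding tree (your transition-prefixes are exactly the moments $m_K(q',i)$ of the paper's iterative expansion), define $v$, $\Value$, and $\Choiceam$ identically, and then verify the same checklist---the branching-time axioms, the partition and no-choice-between-undivided-histories constraints, independence via the product automaton, and finiteness of $\Choiceam$ from finiteness of $\Transrel$. The only differences are presentational (declarative prefix order versus the paper's inductive level-by-level argument for non-backward branching), plus your welcome explicit statement of the non-blocking assumption needed for histories to be infinite.
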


\subsection{Model checking algorithm}
\label{sec:mc algo}
The \emph{cstit model-checking problem} is:
Given a stit automaton $T$ that models an agent $\alpha$ and an obligation $A$, determine whether
$\Model_T,root/h \models \Ostit{\alpha}{A}$ for some $h \in H_{root}$. The case of conditional oughts $\OstitC{\alpha}{A}{B}$ is similarly handled and we omit the details.

Given the structure of an obligation given in Def.~\ref{def:obligation structure}, the model-checking problem can be broken down into two parts: what is the set of optimal actions at $root$, $\Optimalar$? 
And out of these optimal actions, which ones guarantee the truth of $A$?
(Recall Eqs.~\eqref{eq:proposition dominance}-\eqref{eq:optimalam}:  action optimality is determined solely by the $\Value$ function, and not by which obligations its histories satisfy).
If all optimal actions guarantee $A$, then by Def.~\ref{def:cstit}, $\Model_T$ has obligation $A$ at $root/h$.
The algorithm is presented in Algorithm~\ref{algo:mc} page~\pageref{algo:mc}.
In it, $\models_{\text{\CTLs}}$ denotes the classical \CTLs~satisfaction relation.
\begin{theorem}
	\label{thm:mc is possible}
	Algorithm~\ref{algo:mc} returns True iff $\Model,root/h\models \Ostit{\alpha}{A}$. 
	It has complexity $O(2m(|T|+c_\lambda+|T|\cdot 2^{|\formula|}))$, where $c_\lambda$ is the cost of computing the minimum and maximum values of a strategy executed on automaton $T$ and $|T|$ is the number of states and transitions in $T$.
\end{theorem}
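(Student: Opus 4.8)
The plan is to prove correctness and complexity separately, with the correctness argument mirroring the two phases of Algorithm~\ref{algo:mc}. The starting point is Definition~\ref{def:ought semantics}: $\Model_T, root/h \models \Ostit{\alpha}{A}$ holds iff $K \subseteq |A|_{root}$ for every $K \in \Optimalar$. Since neither $\Optimalar$ nor $|A|_{root}$ depends on the particular history $h$, the quantifiers ``for some $h$'' and ``for all $h$'' coincide, and it suffices to reason at the root. Correctness therefore reduces to two claims: (1) the first for-loop together with lines~\ref{undomin}--\ref{line:optimalam} computes exactly $\Optimalar$; and (2) the second for-loop returns \textbf{True} iff every $K_n \in \Optimalar$ guarantees $A$, i.e. $K_n \subseteq |A|_{root}$. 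I would prove each claim, combine them through Definition~\ref{def:ought semantics}, and then tally operations for the bound.

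For claim (1), the key point is that, because $T$ is treated as modeling the single agent $\alpha$ (the product of all agents), there are no competing background agents, so $\State_\alpha^{root}$ is trivial and the dominance order collapses to the proposition order of Eq.~\eqref{eq:proposition dominance}: $K \preceq K'$ iff $\Value(h) \le \Value(h')$ for all $h \in K,\ h' \in K'$. Writing $\vnl = \min_{h \in K_n}\Value(h)$ and $\vnu = \max_{h \in K_n}\Value(h)$, I would then show that the ``naturally defined'' strict order yields the clean interval characterization $K_n \prec K_{n'}$ iff $\vnu < \ell_{n'}$, which is precisely the negation of the un-dominated test described at line~\ref{undomin}. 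Hence the set built on line~\ref{line:optimalam} equals $\Optimalar$ as given by Eq.~\eqref{eq:optimalam}. It remains to verify line~\ref{comp vn} returns the correct $\vnl,\vnu$: by the $T_n'$ construction (Appendix~\ref{apx: proof of Mt}) its executions are exactly the histories of $K_n$, and $\Value$ of a history is the $\lambda$-accumulation of traversed weights, so the min/max strategy values from the $c_\lambda$ subroutine are indeed $\vnl$ and $\vnu$.

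For claim (2), I would first record, via Theorem~\ref{prop:Mt} and the $T_n'$ construction, that the executions of $T_n'$ are exactly the histories in $K_n$, so for any $\formula \in$ \CTLs, $K_n \subseteq |\formula|_{root}$ iff $T_n' \models_{\text{\CTLs}} \forall \formula$; the case $A=\formula$ in Definition~\ref{def:obligation structure} is then immediate. For $A = \dstit{\alpha}{\formula}$ the crucial observation is that $\Choice{\alpha}{root}(h) = K_n$ for every $h \in K_n$, so by Definition~\ref{def:dstit} the truth of $\dstit{\alpha}{\formula}$ at $root/h$ is independent of $h \in K_n$ and equals the conjunction of $K_n \subseteq |\formula|_{root}$ and $|\formula|_{root} \neq H_{root}$. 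Translating through model-checking, $K_n$ guarantees $A$ iff $T_n' \models_{\text{\CTLs}} \forall \formula$ and $T \not\models_{\text{\CTLs}} \forall \formula$, which is exactly the negation of the return-\textbf{False} condition on line~\ref{line:false 2}; the case $A = \neg\dstit{\alpha}{\formula}$ is the logical negation, handled identically on line~\ref{line:dstit neg}. Since the loop returns \textbf{False} exactly when some optimal $K_n$ fails to guarantee $A$ and returns \textbf{True} otherwise, claim (2) follows.

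For the complexity bound I would tally the two loops. The first runs $m=|\Choicear|$ times, each iteration building $T_n'$ in $O(|T|)$ and computing $\vnl,\vnu$ at cost $c_\lambda$; selecting un-dominated intervals costs $O(m^2)$, absorbed since $m \le |T|$. The second loop runs over at most $m$ optimal actions, each performing at most two \CTLs~model-checks on $T_n'$ (the two checks in the $dstit$ cases), each costing $O(|T|\cdot 2^{|\formula|})$ by the standard \CTLs~bound together with $|T_n'|=O(|T|)$. Summing gives $O\bigl(m(|T|+c_\lambda) + 2m|T|\cdot 2^{|\formula|}\bigr)$, contained in the stated $O\bigl(2m(|T|+c_\lambda+|T|\cdot 2^{|\formula|})\bigr)$. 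The hard part will be claim (1): one must argue carefully that in the merged/product automaton the dominance relation genuinely reduces to comparing the single pair $(\vnl,\vnu)$ per action, so that the interval test is faithful to Eq.~\eqref{eq:optimalam}. The whole efficiency of the first phase rests on avoiding per-history comparisons across the non-trivial background states $\State_\alpha^{root}$ that would otherwise arise, so I expect the bulk of the rigor to go into justifying that this collapse is exactly what the single-agent reading licenses.
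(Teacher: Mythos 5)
Your proposal is correct and follows essentially the same route as the paper's proof: the same decomposition into (i) showing the interval test on $[\vnl,\vnu]$ recovers $\Optimalar$ via the fact that $T_n'$'s executions are exactly the histories starting with $K_n$, and (ii) reducing ``$K_n$ guarantees $A$'' to the \CTLs~checks on $T_n'$ and $T$ for each of the three obligation forms, followed by the same operation count. The one place you go beyond the paper is in explicitly justifying why the dominance order collapses to comparing single intervals per action (trivial background states $\State_\alpha^{root}$ under the product/single-agent reading) --- the paper's Lemma~\ref{lemma:compute optimalam} simply asserts ``by definition of action dominance'' --- so your added care there is a genuine, if small, improvement rather than a deviation.
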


The proof is in Appendix~\ref{apx:proof of mc}. 
This algorithm can be amended to accept a conditional obligation $\OstitC{\alpha}{A}{B}$ by accepting only those actions $K$ in $\Optimalar$ that guarantee $A$ and $B$.
The computation of the minimum and maximum values of a strategy's execution line~\ref{comp vn} clearly depends on the function $\lambda$ used for accumulating weights along the execution: e.g., if $\lambda$ is addition and all the weights are positive, then all executions have infinite value, and every future is ideal, which is a comforting thought but of little interest in modeling the real world.
This question is related to but distinct from temporal logic accumulation~\cite{Boker14Accumulative} and quantitative languages~\cite{Chatterjee08Quantitative}.
We give now one example of a $\lambda$ that can model real-world phenomena, and lead to finite values of $\vnu$.
Take $\lambda = \min$.
For instance, if $w((q,K,q'))$ is the time-to-collision resulting from action $K$ then $\Value(h)$ is the shortest time-to-collision encountered along the history, and an optimal history is one with the highest minimum time-to-collision.
It's a simple matter to prove that $\vnu$ is the maximum weight of any reachable transition from $q_0$, which can be computed in a finite number of steps.
(Unfortunately, different $\lambda$s will, in general, require different customized analyzes.)


\section{Conclusions}
\label{sec:conclusion}
We have demonstrated the use of Dominance Act Utilitarianism in formalizing safety norms for autonomous vehicles.
Our objective was to assess the feasibility and utility of doing so: we expressed safety norms from RSS in DAU; 
found undesirable consequences in these norms;
and showed that system designers can automatically derive a formalized system's obligations and objectives.

It is desirable next to enrich the interaction between deontic and temporal modalities, e.g. to express things like `In the next planning cycle the AV must see to it that it changes lanes'.  
This then allows reasoning about obligation propagation through time~\cite{Brunel08Propagation}.
It will be equally important to study \textit{obligation inheritance} between groups and individuals: e.g., if it is the group's obligation to give the right-of-way, what does that imply for individual obligations? Given that deontic logics were developed for ethical analysis, this work also opens the way to formally considering ethical implications of system design. In our experience even framing technical specifications as obligations can make explicit an implicit norm. Addressing ethical considerations is necessary to build trust in autonomous systems, and this work suggests it may be possible to formalize a a system's ethical constraints, and analyze the moral implications of its design.
These and other considerations will ultimately determine the suitability of DAU for AV design and verification. 

\appendix

\section{More elements of Dominance Act Utilitarianism}
\label{apx:choice}
\textbf{Agent choice}.
The choice mapping $\Choiceam$ in a general deontic stit model obeys
\begin{itemize}
	\item The actions in $\Choiceam$ partition the set $H_m$: $K\cap K' = \emptyset$ for every $K,K'$ and $\cup_{K\in \Choiceam} K = H_m$.
	There is no loss of generality in this constraint, it is a formality that allows us to maintain the useful tree structure.
	\item Independence of agents: given any group of agents $\Group \subseteq \Agent$, $\cap_{\alpha\in \Group}\Choiceam  \neq \emptyset$.
	That is, the actions of one agent do not prevent the choice of action available to any other agent at the \textit{same} moment $m$. 
	\item No choice between undivided histories: If two histories are still undivided at $m$ (that is, they share a moment $m'>m$) then they belong to the same action $K$ in $\Choiceam$.
\end{itemize}

\section{Construction of $\Model_T$ and Proof of Thm.~\ref{prop:Mt}}
\label{apx: proof of Mt}
We give the formal construction of stit model $\Model_T$ from stit automaton $T$, then prove Thm.~\ref{prop:Mt}.
The construction is as follows (see Fig.~\ref{fig:algo}).

$\bullet$ \textit{Initialization}: set iteration $i=1$, $q = q_0$, $root=0$, $S=\{\langle q_0, root\rangle \}$, $\Tree = \{root\}$.

$\bullet$ \textit{Expansion}: 
Set $S' = \emptyset$. 
For every couple $\langle q, m\rangle \in S$, 
\begin{enumerate}[Exp1)]
	\item set $\Choiceam = \{K: (q,K,q')\in \Transrel(q) \text{ for some } q'\}$: the agent has a choice of actions at $m$ from the actions that label the transitions out of $q$. \label{expand:1}
	\item For every $K\in \Choiceam$, and every $q' \in Post(q,K)$, add a new moment $m_K(q',i)$ to $\Tree$ with $m_K(q',i) > m$, and such that the history ending with the moments $( m,m_K(q',i))$ belongs to action $K$. 
	Also, add the couple $\langle q',m_K(q',i)\rangle$ to $S'$. \label{expand:2}
	\item Set the label map $v(m/h) = L(q)$ for every history $h$ passing through $m$. \label{expand:3}
\end{enumerate}

$\bullet$ \textit{Update}: Set $S=S'$.
For the next iteration, set $i=i+1$. 
Goto \textit{Expansion}.

$\bullet$ \textit{Valuation}: For every history $h$ constructed in the \textit{Expansion} loop, its value is computed as $\Value(h)=\lambda(w(e_i)_{i\in \Ne})$ where $e_i$'s are the transitions taken while constructing $h$.
($\lambda$ must be such that infinite accumulation yields a finite value).

\begin{proof}[Thm.~\ref{prop:Mt}]
We first verify that $\Model_T$ is a branching time model (Def.~\ref{def:frame and model}). 
The ordering between moments is irreflexive and transitive by construction. 

Take 3 moments $m_1$, $m_2$ and $m_3$ s.t. $m_1<m_3$ and $m_2<m_3$. 
Moments are only added in Exp\ref{expand:2} so $m_3=m_K(q',i)$ for some $q',i$, and by construction there is a unique moment $m_{K'}(q,i-1)$ at level $i-1$ s.t. $m_{K}(q',i)>m_{K'}(q,i-1)$.
By a simple inductive argument, there is a unique moment $m_{K_j}(q_j,j)$ at level $j$ s.t. $m_K(q',i)>m_{K_j}(q_j,j)$ for \textit{every} $j<i$. 
Thus the sequence of moments that are smaller than $m(q',i)$ forms a chain (a linear order) to which must belong both $m_2$ and $m_3$, so either $m_2<m_3$ or $m_3<m_2$.

The tree is rooted at $0$ as can be easily established by induction on $i$.

The function $v$ in Exp\ref{expand:3} plays the role of the stit model's label map.

We now show that $\Choiceam$ satisfies the constraints of Appendix~\ref{apx:choice} on choices:
\\
$\bullet$ The actions in $\Choiceam$ partition $H_m$: indeed, take a history starting at $m=m_K(q,i-1)$. It is expanded in Exp\ref{expand:2} only, by $m_{K'}(q',i)$ say, and the expanded history $\langle m,m_{K'}(q',i)\rangle$ is assigned to only one action. 
Thus the histories $\langle m,m_{K'}(q',i)\rangle$, $q'\in Post(q,K'), K'\in \Choiceam$ are partitioned among the actions at $m$.
By definition of the automaton transition relation, two different actions must lead to two different states $q'$ and $q''$ so the newly created moments $m_{K'}(q',i+1)$ and $m_{K''}(q'',i+1)$ at the next iteration $i+1$, and which expand these histories, are different.
Therefore, two histories that were in different actions at $m$ will never share a moment after $m$.
Thus the actions at $m$ partition $H_{m}$.
\\
$\bullet$ Independence of agents: this is automatically guaranteed by using an automaton that models the product of all stit automata.
\\
$\bullet$ No choice between undivided histories: as established in the first bullet of the proof, histories that are in different actions at $m$ will never share a moment after $m$.
Therefore, two histories that share a moment at $m'>m$ must be in the same action at $m$.

Finally, $\Choiceam$ is finite for each moment since, as can be seen in Exp\ref{expand:1}, $\Choiceam$ is (isomorphic to) a subset of $\Delta$ and the latter is finite.
\hspace*{\fill}QED.
\end{proof}

\section{Proof of Thm.~\ref{thm:mc is possible}}
\label{apx:proof of mc}
Recall that by executing a stit automaton, a stit model is created (Appendix~\ref{apx: proof of Mt}).

\begin{lemma}
	\label{lemma:Tn'}
	The histories generated by $T_n'$ are exactly the histories of $T$ whose first action is $K_n$, modulo a re-naming of the states.
\end{lemma}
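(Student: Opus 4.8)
The plan is to prove the correspondence at the level of \emph{executions} first, and then transport it to histories using the construction of $\Model_T$ in Appendix~\ref{apx: proof of Mt}, under which each infinite execution of an automaton generates exactly one history of the induced stit model. It therefore suffices to exhibit a bijection between the executions of $T_n'$ and those executions of $T$ whose first action is $K_n$, that commutes with the labeling $L$ (so the generated $m/h$ pairs carry the same propositions) and with the weights $w$ (so the two histories receive the same $\Value$, making them identical modulo renaming). The organizing device is a single un-renaming map $\rho$ that sends every renamed state of $T_n^{\text{ren}}$ to its original in $T$ and is the identity on the states of the $T$-copy.

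First I would check that $\rho$ carries every transition of $T_n'$ to a legal transition of $T$. A transition living inside $T_n^{\text{ren}}$ un-renames to a transition of $T_n$, which coincides with a transition of $T$ except possibly out of the initial state; a redirected transition $(q,K,T.q_0)$ un-renames to $(\rho(q),K,q_0)$, again a transition of $T$. The crucial structural fact is that the only outgoing transition of the initial state $T_n'.q_0 = T_n^{\text{ren}}.q_0$ is the one labeled $K_n$, inherited from $T_n$, and that after the redirection $T_n^{\text{ren}}.q_0$ has \emph{no incoming} transitions in $T_n'$. Hence the first-action restriction is felt only at the very first step and never again: every later return to ``the initial state'' lands in $T.q_0$, where all actions are available.

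For the forward inclusion, given any execution $\pi'$ of $T_n'$, its image $\rho(\pi')$ (apply $\rho$ to each transition) is by the above a legal execution of $T$, and since $\pi'$ must leave $T_n^{\text{ren}}.q_0$ via $K_n$, the execution $\rho(\pi')$ starts with $K_n$. For the backward inclusion I would lift an arbitrary $T$-execution $\pi = (q_0,K_n,q_1)(q_1,K_1,q_2)\cdots$ by tracking one ``switch'': $\pi$ is matched inside $T_n^{\text{ren}}$ up to the first index $j$ at which $q_j=q_0$, each step $(q_i,K_i,q_{i+1})$ being matched by the renamed transition on $\widetilde{q_i},\widetilde{q_{i+1}}$ (which exists because every renamed state other than the initial one carries all of $T$'s transitions); at step $j$ the transition targeting $q_0$ is redirected into $T.q_0$, and the remaining suffix of $\pi$ is matched verbatim inside the unmodified $T$-copy. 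This lifted execution maps back to $\pi$ under $\rho$, and $\rho$ is injective on executions since the disjoint renamed copy is collapsed onto $T$ and the switch point is determined by $\pi$ itself; the two inclusions then give the stated bijection.

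I expect the main obstacle to be the bookkeeping around the redirection at $q_0$: one must argue carefully that forcing $K_n$ at the \emph{renamed} initial state constrains only the first action while leaving every later choice unconstrained. This is exactly the purpose of splitting $q_0$ into a fresh starting copy and redirecting returns to the original $T.q_0$, and verifying that $\rho$ simultaneously preserves $L$ and $w$ (immediate, since $T_n^{\text{ren}}$ is a verbatim copy) is what upgrades the execution-level bijection to the claimed identification of histories modulo renaming.
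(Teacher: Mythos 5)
Your argument is correct and, for the direction the paper actually writes out, follows the same route: the key observation that $T_n^{\text{ren}}$ is a verbatim copy of $T$ with only the non-$K_n$ transitions out of the initial state deleted, so that an execution entering the $T$-copy via a redirected transition continues as a legal $T$-execution. Where you go beyond the paper is in proving the \emph{converse} inclusion. The paper's proof only shows that every history of $T_n'$ is a $T$-history starting with $K_n$ (via the case split on whether the execution ever leaves $T_n^{\text{ren}}$); it never argues that every $T$-history starting with $K_n$ is realized by $T_n'$ --- which the word ``exactly'' in the statement requires, and which Lemma~\ref{lemma:compute optimalam} implicitly uses when it equates the extremal values of $T_n'$-strategies with $\min$ and $\max$ of $\Value$ over all of $K_n$. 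Your lifting construction --- match $\pi$ inside $T_n^{\text{ren}}$ until the first return to $q_0$, then switch into the $T$-copy through the redirected transition and continue verbatim --- supplies exactly this missing half, and your remark that after redirection $T_n^{\text{ren}}.q_0$ has no incoming transitions is the right justification that forcing $K_n$ constrains only the first step and nothing afterwards. The explicit un-renaming map $\rho$, together with the check that it preserves $L$ and $w$ (hence $\Value$), also makes the ``modulo re-naming'' clause precise, which the paper leaves implicit. In short: same core idea, but your version is the complete two-sided argument the lemma statement actually calls for.
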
 
\begin{proof}
	Recall that $T_n'$ has two components, namely a copy $T_n^{\text{ren}}$ of $T_n$ and a copy of $T$. 
	See Fig.~\ref{fig:algo}.
	$T_n$ is obtained by removing transitions from $T$, thus every history generated by $T_n$ is a valid $T$-history.
	Every history generated by $T_n$ starts with $K_n$ by construction.
	So every history $h$ of $T_n'$ starts with $K_n$, because it starts in $T_n^{\text{ren}}$.
	
	\underline{Case 1: $h$ never leaves $T_n^{\text{ren}}$}.
	$T_n^{\text{ren}}$ is nothing but a renaming of $T_n$ and we've already established that a history of $T_n$ is a history of $T$, so this case is done.
	
	\underline{Case 2: $h$ leaves $T_n^{\text{ren}}$}.
	That is, a transition takes the execution into the $T$ copy. 
	Up to the transition, $h$ is a history of $T$ as established in Case 1. 
	The transition itself, say $(q, K,T.q_0)$, is a valid transition of $T$ (modulo re-naming) since it was created by replacing a $T$ transition of the form $(T.q,T.K, T.q_0)$.
	Once in the $T$ copy, the history of course continues to be a valid history of $T$.
	\hspace*{\fill}QED.
\end{proof}

\begin{lemma}
	\label{lemma:compute optimalam}
	The set computed at line~\ref{line:optimalam} is indeed $\Optimalar$.
\end{lemma}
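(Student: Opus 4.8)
The plan is to prove the lemma in three moves: first identify the interval $[\vnl,\vnu]$ computed on line~\ref{comp vn} with the true range of utility values of the action $K_n$; second, reduce the action-dominance order $\prec$ used in Eq.~\eqref{eq:optimalam} to a comparison of such intervals; and third, match the algorithm's \emph{un-dominated} test on line~\ref{undomin} to the non-domination condition defining $\Optimalar$.

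First I would invoke Lemma~\ref{lemma:Tn'}: the histories generated by $T_n'$ are exactly the histories of $\Model_T$ whose first action is $K_n$, i.e.\ exactly the histories belonging to the action $K_n \in \Choicear$ (recall that an action is identified with the set of histories realizable after taking it). Since $T_n'$ is assembled from renamed copies of $T$, the transition weights and the accumulation function $\lambda$ are preserved, so a history of $T_n'$ and its $T$-counterpart carry the same value $\Value(\cdot)$. Hence the quantities on line~\ref{comp vn} satisfy $\vnu = \max_{h \in K_n}\Value(h)$ and $\vnl = \min_{h \in K_n}\Value(h)$, so $[\vnl,\vnu]$ is precisely the value range of $K_n$. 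Writing $u_K=\max_{h\in K}\Value(h)$ and $\ell_K=\min_{h\in K}\Value(h)$ for a generic action $K$, this says $\vnu=u_{K_n}$ and $\vnl=\ell_{K_n}$.

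Next I would reduce the dominance order to an endpoint comparison. Because the model-checking problem is posed for a single agent $\alpha$ (the product of several agents is again one stit automaton), the set of background states $\State_{\alpha}^{root}=\Choice{\Agent\setminus\{\alpha\}}{root}$ is the trivial singleton $\{H_{root}\}$. Therefore, via Eq.~\eqref{eq:proposition dominance}, $K\preceq K'$ collapses to $K\leq K'$, i.e.\ $\Value(h)\leq \Value(h')$ for all $h\in K,\,h'\in K'$, which is just $u_K\leq \ell_{K'}$. Taking the natural strict version of the proposition order ($X<Y$ iff every value in $X$ is strictly below every value in $Y$) yields $K\prec K'$ iff $u_K<\ell_{K'}$. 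Finally I would combine the pieces: by Eq.~\eqref{eq:optimalam}, $K_n\in\Optimalar$ iff no $K_{n'}\in\Choicear$ satisfies $K_n\prec K_{n'}$, i.e.\ iff no index $n'$ satisfies $\ell_{n'}>\vnu$, which is verbatim the definition of $[\vnl,\vnu]$ being un-dominated. Hence the set built on line~\ref{line:optimalam} equals $\Optimalar$.

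The step I expect to require the most care is the second one: pinning down that, under the single-agent model-checking assumption, action dominance genuinely collapses to one value-interval comparison, and that the ``naturally defined'' strict order $\prec$ is exactly $u_K<\ell_{K'}$ rather than the weaker ``$u_K\le\ell_{K'}$ together with non-reverse-dominance''. Getting this strict/non-strict boundary right is what makes the algorithm's simple endpoint test ($\ell_{n'}>\vnu$) coincide with membership in $\Optimalar$; by contrast, the identification in the first move is comparatively routine once Lemma~\ref{lemma:Tn'} and the weight-preserving renaming in $T_n'$ are in hand.
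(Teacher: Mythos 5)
Your proof takes essentially the same route as the paper's: identify $[\vnl,\vnu]$ with the value range of $K_n$ via Lemma~\ref{lemma:Tn'}, reduce action dominance to an endpoint comparison, and match the algorithm's un-dominated test to the non-domination condition in Eq.~\eqref{eq:optimalam}. You are in fact more explicit than the paper's terse argument about the two points it glosses over --- that the background-state set is trivial for the (product) automaton, and where the strict/non-strict boundary of $\prec$ must sit for the test $\ell_{n'}>\vnu$ to coincide exactly with non-optimality.
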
 
\begin{proof}
	Every history of $T_n'$ starts with $K_n$ so $\vnl = \min\{\Value(h) \such h \in K_n \}$ and $\vnu=\max \{\Value(h) \such h\in K_n\}$.
	By definition of action dominance, $K_n \preceq K_n'$ in $T$ iff $\vnu\leq \vnl'$.
	So $[\vnl,\vnu]$ is un-dominated iff its action $K_n$ is un-dominated and must be optimal.
	\hspace*{\fill}QED.
\end{proof}

\begin{lemma}
	\label{lemma:false 1}
	If line~\ref{line:false 1} is executed, then $K_n \nsubseteq |A|_{root}$.
\end{lemma}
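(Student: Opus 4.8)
The plan is to reduce the \CTLs~model-checking test on line~\ref{line:false 1} to the set containment $K_n \subseteq |A|_{root}$, with Lemma~\ref{lemma:Tn'} serving as the bridge. Line~\ref{line:false 1} lies in the branch where $A$ is a \CTLs~formula, and it fires exactly when $T_n' \not\models_{\text{\CTLs}} \forall A$. It therefore suffices to prove the equivalence $T_n' \models_{\text{\CTLs}} \forall A$ iff $K_n \subseteq |A|_{root}$; the negation of both sides is precisely the claim of the lemma.

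First I would unpack the left-hand side. The state formula $\forall A$ ranges its path quantifier over all paths issuing from the initial state $q_0$, so $T_n' \models_{\text{\CTLs}} \forall A$ holds iff every execution of $T_n'$ starting at $q_0$ satisfies the path formula $A$. By Lemma~\ref{lemma:Tn'}, these executions are exactly the histories of $T$ whose first action is $K_n$, modulo a re-naming of states. Because \CTLs~satisfaction along a path depends only on the labels $L(\cdot)$ encountered and not on the identity of the states, the re-naming is immaterial: as labelled sequences, the executions of $T_n'$ coincide with the histories in $K_n$ when the latter are viewed inside $\Model_T$.

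Next I would align the two satisfaction relations. For a \CTLs~formula $A$, the DAU relation $\Model_T, root/h \models A$ reduces to ordinary \CTLs~evaluation of $A$ along $h$; this is exactly how the uniform satisfaction relation was set up in Section~\ref{sec:dau}, and it is the same evaluation performed by the model checker on the (unweighted) automaton. Combining this with the previous paragraph, every history $h\in K_n$ satisfies $A$ in $\Model_T$ iff every execution of $T_n'$ satisfies $A$, which is to say $K_n \subseteq |A|_{root}$ iff $T_n' \models_{\text{\CTLs}} \forall A$. Taking contrapositives, $T_n' \not\models_{\text{\CTLs}} \forall A$ yields $K_n \nsubseteq |A|_{root}$, completing the argument.

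The main obstacle is entirely bookkeeping rather than conceptual: I must confirm that the ``modulo re-naming'' proviso of Lemma~\ref{lemma:Tn'} genuinely preserves \CTLs~satisfaction, and that $T_n'$ produces no executions beyond those corresponding to $K_n$. Both follow from the construction of $T_n'$---transitions are only deleted or renamed, never added, and the labeling is carried over unchanged---together with Lemma~\ref{lemma:Tn'}. No step poses a real difficulty; the whole content of the lemma is the clean label-level correspondence established above.
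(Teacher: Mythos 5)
Your proof is correct and follows essentially the same route as the paper's: both use Lemma~\ref{lemma:Tn'} to transfer a counterexample execution of $T_n'$ (witnessing $T_n' \not\models_{\text{\CTLs}} \forall A$) to a history of $T$ in $K_n$ that violates $A$, giving $K_n \nsubseteq |A|_{root}$. You additionally spell out the full biconditional and the label-preservation under re-naming, which the paper leaves implicit, but the core argument is identical.
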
 
\begin{proof}
If $T_n'\not\models \forall A$ this means some execution $\tilde{h}$ of $T_n'$ violates $A$.
By Lemma~\ref{lemma:Tn'} $\tilde{h}$ is also a history of $T$ starting with the optimal action $K_n$, so that $K_n \nsubseteq |A|_{root}$. 
QED.
\end{proof}

\begin{lemma}
	\label{lemma:false 2}
	If line~\ref{line:false 2} is executed, then $\Model,root/h\not\models \Ostit{\alpha}{\formula}$
\end{lemma}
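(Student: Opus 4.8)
The plan is to show that whenever line~\ref{line:false 2} fires, the optimal action $K_n$ under consideration fails to guarantee the obligation $A=\dstit{\alpha}{\formula}$, and then to read off from Def.~\ref{def:ought semantics} that the dominance ought fails. (Here the conclusion should be understood as $\Model,root/h\not\models\Ostit{\alpha}{A}$ with $A=\dstit{\alpha}{\formula}$, since line~\ref{line:false 2} sits inside the branch opened on line~\ref{line:dstit pos}.) First I would record two facts fixed by the control flow: the branch on line~\ref{line:dstit pos} guarantees $A=\dstit{\alpha}{\formula}$ with $\formula\in\CTLs$, and the loop on line~\ref{line:for} guarantees $K_n\in\Optimalar$.

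Next I would translate the two \CTLs~model-checks into set-theoretic statements about histories. By Lemma~\ref{lemma:Tn'} the executions of $T_n'$ are exactly the histories of $T$ whose first action is $K_n$, so $T_n'\models_{\text{\CTLs}}\forall\formula$ holds iff every $h\in K_n$ satisfies $\formula$, i.e.\ iff $K_n\subseteq|\formula|_{root}$; likewise $T\models_{\text{\CTLs}}\forall\formula$ holds iff $|\formula|_{root}=H_{root}$. Unwinding Def.~\ref{def:dstit} at the root, for $h\in K_n$ we have $\Choicear(h)=K_n$, so $root/h\models\dstit{\alpha}{\formula}$ iff $K_n\subseteq|\formula|_{root}$ and $|\formula|_{root}\neq H_{root}$; since this condition is independent of the chosen $h\in K_n$, it follows that $K_n\subseteq|A|_{root}$ (i.e.\ $K_n$ guarantees $A$) iff both $K_n\subseteq|\formula|_{root}$ and $|\formula|_{root}\neq H_{root}$ hold.

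The guard of line~\ref{line:false 2} is ``$T\models_{\text{\CTLs}}\forall\formula$ or $T_n'\not\models_{\text{\CTLs}}\forall\formula$'', which by the previous paragraph reads ``$|\formula|_{root}=H_{root}$ or $K_n\not\subseteq|\formula|_{root}$'' --- exactly the De Morgan negation of the two-part condition characterizing ``$K_n$ guarantees $A$''. Hence if line~\ref{line:false 2} executes then $K_n\not\subseteq|A|_{root}$, and because $K_n\in\Optimalar$ is an optimal action failing to guarantee $A$, Def.~\ref{def:ought semantics} yields $\Model,root/h\not\models\Ostit{\alpha}{A}$, as required. I expect the only real obstacle to be bookkeeping: one must keep the two model-checks distinct (the algorithm's comments stress they are \emph{not} equivalent --- one ranges over all of $H_{root}$ and the other only over $K_n$) and correctly recognize the disjunctive guard as the negation of the conjunctive dstit-guarantee condition; the passage from ``$K_n$ does not guarantee $A$'' to ``$\Ostit{\alpha}{A}$ fails'' is then immediate from $K_n\in\Optimalar$ and the universal quantifier in Def.~\ref{def:ought semantics}.
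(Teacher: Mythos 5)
Your proof is correct and follows essentially the same route as the paper's: translate the two \CTLs~model-checks into $|\formula|_{root}=H_{root}$ and $K_n\not\subseteq|\formula|_{root}$ respectively (via Lemma~\ref{lemma:Tn'}), observe that the disjunctive guard negates exactly the two-part $dstit$ condition, and conclude that the optimal $K_n$ fails to guarantee $A=\dstit{\alpha}{\formula}$. Your packaging of the guard as the De Morgan negation of the guarantee condition, and your explicit final appeal to Def.~\ref{def:ought semantics}, are slightly more spelled out than the paper's version but not a different argument.
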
 
\begin{proof}
	$T \models_{\text{\CTLs}} \forall \formula$ iff every history of $T$ satisfies $\formula$ and so $H_{root}=|\formula|_{root}$; in this case, by definition of $dstit$, $root/h \not\models \dstit{\alpha}{\formula}$.
	$T_n'\not\models_{\text{\CTLs}} \forall \formula$ iff there exists a history $\tilde{h}$ of $T_n'$ which violates $\formula$.
	Again this is also a history of $T$ which belongs to the optimal $K_n$ so that $K_n \nsubseteq |\formula|_{root}$.
\hspace*{\fill}QED.
\end{proof}

\begin{proof}[Thm.~\ref{thm:mc is possible}]
	We need to establish that the algorithm returns True iff $K\subseteq |A|_{root}$ for every optimal $K$.
	The set of optimal actions is computed at line~\ref{line:optimalam} by Lemma~\ref{lemma:compute optimalam}.
	The for-loop at line~\ref{line:for} visits each optimal action in turn. 
	Line~\ref{line:true} is executed iff none of the `return False' statements preceding it are executed;
	namely, iff $K \subseteq |A|_{root}$ by Lemma~\ref{lemma:false 1} in Case $A$ is \CTLs,
	or iff $H\neq |\formula|_{root}$ and $K\subseteq |\formula|_{root}$ in the case of line~\ref{line:dstit pos} by Lemma~\ref{lemma:false 2} 
	(and the case of line~\ref{line:dstit neg} is similarly treated).
	These are the definition of $root/h\models \Ostit{\alpha}{A}$.
	
	For the complexity, the first for-loop takes $2|T|$ operations per iteration to create the automata copies and $2c_\lambda$ to compute $\ell_n$ and $u_n$.
	Finding the un-dominated intervals takes $m-1$ comparisons to find the largest $\ell_n$ and $m$ to compare each $u_n$ to $\max \ell_n$.
	The second for-loop does at the most two CTL$^*$ model-checking runs per optimal action; each run has complexity $O(|T|\cdot 2^{|\formula|})$ and there are at most $m$ optimal actions.
	The total is then $O(2m(|T|+c_\lambda) + 2m-1 + 2m(|T|2^{|\formula|}))$.
\hspace*{\fill}QED.
\end{proof}

%
%
%
\bibliographystyle{ACM-Reference-Format}
 \bibliography{iccps2017,hscc17,hscc2016,fainekos_bibrefs,hscc19,cav2019,colin_bib}


\begin{thebibliography}{24}


\ifx \showCODEN    \undefined \def \showCODEN     #1{\unskip}     \fi
\ifx \showDOI      \undefined \def \showDOI       #1{#1}\fi
\ifx \showISBNx    \undefined \def \showISBNx     #1{\unskip}     \fi
\ifx \showISBNxiii \undefined \def \showISBNxiii  #1{\unskip}     \fi
\ifx \showISSN     \undefined \def \showISSN      #1{\unskip}     \fi
\ifx \showLCCN     \undefined \def \showLCCN      #1{\unskip}     \fi
\ifx \shownote     \undefined \def \shownote      #1{#1}          \fi
\ifx \showarticletitle \undefined \def \showarticletitle #1{#1}   \fi
\ifx \showURL      \undefined \def \showURL       {\relax}        \fi
\providecommand\bibfield[2]{#2}
\providecommand\bibinfo[2]{#2}
\providecommand\natexlab[1]{#1}
\providecommand\showeprint[2][]{arXiv:#2}

\bibitem[\protect\citeauthoryear{Alur, Henzinger, and Kupferman}{Alur
  et~al\mbox{.}}{2002}]%
        {AlurHK02ATL}
\bibfield{author}{\bibinfo{person}{Rajeev Alur}, \bibinfo{person}{Thomas~A.
  Henzinger}, {and} \bibinfo{person}{Orna Kupferman}.}
  \bibinfo{year}{2002}\natexlab{}.
\newblock \showarticletitle{Alternating-time Temporal Logic}.
\newblock \bibinfo{journal}{\emph{J. ACM}} \bibinfo{volume}{49},
  \bibinfo{number}{5} (\bibinfo{date}{Sept.} \bibinfo{year}{2002}),
  \bibinfo{pages}{672--713}.
\newblock
\showISSN{0004-5411}
\urldef\tempurl%
\url{https://doi.org/10.1145/585265.585270}
\showDOI{\tempurl}


\bibitem[\protect\citeauthoryear{Boker, Chatterjee, Henzinger, and
  Kupferman}{Boker et~al\mbox{.}}{2014}]%
        {Boker14Accumulative}
\bibfield{author}{\bibinfo{person}{Udi Boker}, \bibinfo{person}{Krishnendu
  Chatterjee}, \bibinfo{person}{Thomas~A. Henzinger}, {and}
  \bibinfo{person}{Orna Kupferman}.} \bibinfo{year}{2014}\natexlab{}.
\newblock \showarticletitle{Temporal Specifications with Accumulative Values}.
\newblock \bibinfo{journal}{\emph{ACM Trans. Comput. Logic}}
  \bibinfo{volume}{15}, \bibinfo{number}{4}, Article \bibinfo{articleno}{27}
  (\bibinfo{date}{July} \bibinfo{year}{2014}), \bibinfo{numpages}{25}~pages.
\newblock
\showISSN{1529-3785}


\bibitem[\protect\citeauthoryear{Broersen and Brunel}{Broersen and
  Brunel}{2008}]%
        {Brunel08Propagation}
\bibfield{author}{\bibinfo{person}{Jan Broersen} {and} \bibinfo{person}{Julien
  Brunel}.} \bibinfo{year}{2008}\natexlab{}.
\newblock \showarticletitle{`{W}hat {I} fail to do Today, {I} Have to Do
  Tomorrow': A Logical Study of the Propagation of Obligations}. In
  \bibinfo{booktitle}{\emph{Computational Logic in Multi-Agent Systems}},
  \bibfield{editor}{\bibinfo{person}{Fariba Sadri} {and} \bibinfo{person}{Ken
  Satoh}} (Eds.). \bibinfo{publisher}{Springer Berlin Heidelberg},
  \bibinfo{address}{Berlin, Heidelberg}, \bibinfo{pages}{82--99}.
\newblock
\showISBNx{978-3-540-88833-8}


\bibitem[\protect\citeauthoryear{Chatterjee, Doyen, and Henzinger}{Chatterjee
  et~al\mbox{.}}{2008}]%
        {Chatterjee08Quantitative}
\bibfield{author}{\bibinfo{person}{Krishnendu Chatterjee},
  \bibinfo{person}{Laurent Doyen}, {and} \bibinfo{person}{Thomas~A.
  Henzinger}.} \bibinfo{year}{2008}\natexlab{}.
\newblock \showarticletitle{Quantitative Languages}. In
  \bibinfo{booktitle}{\emph{Computer Science Logic}},
  \bibfield{editor}{\bibinfo{person}{Michael Kaminski} {and}
  \bibinfo{person}{Simone Martini}} (Eds.). \bibinfo{publisher}{Springer Berlin
  Heidelberg}, \bibinfo{pages}{385--400}.
\newblock
\showISBNx{978-3-540-87531-4}


\bibitem[\protect\citeauthoryear{Chellas}{Chellas}{1968}]%
        {chellas1968logical}
\bibfield{author}{\bibinfo{person}{B.F. Chellas}.}
  \bibinfo{year}{1968}\natexlab{}.
\newblock \bibinfo{booktitle}{\emph{The Logical Form of Imperatives}}.
\newblock \bibinfo{publisher}{Department of Philosophy, Stanford University}.
\newblock


\bibitem[\protect\citeauthoryear{Clarke, Grumberg, and Peled}{Clarke
  et~al\mbox{.}}{1999}]%
        {ClarkeGP99}
\bibfield{author}{\bibinfo{person}{Edmund~M. Clarke}, \bibinfo{person}{Orna
  Grumberg}, {and} \bibinfo{person}{Doron~A. Peled}.}
  \bibinfo{year}{1999}\natexlab{}.
\newblock \bibinfo{booktitle}{\emph{Model Checking}}.
\newblock \bibinfo{publisher}{MIT Press}, \bibinfo{address}{Cambridge,
  Massachusetts}.
\newblock


\bibitem[\protect\citeauthoryear{Gabbay, Horty, and Parent}{Gabbay
  et~al\mbox{.}}{2013}]%
        {DLHandbook}
\bibfield{editor}{\bibinfo{person}{Dov Gabbay}, \bibinfo{person}{John Horty},
  {and} \bibinfo{person}{Xavier Parent}} (Eds.).
  \bibinfo{year}{2013}\natexlab{}.
\newblock \bibinfo{booktitle}{\emph{Handbook of deontic logic and normative
  systems}}.
\newblock \bibinfo{publisher}{College Publications}.
\newblock


\bibitem[\protect\citeauthoryear{Gerdes and Thornton}{Gerdes and
  Thornton}{2015}]%
        {Gerdes2015}
\bibfield{author}{\bibinfo{person}{J.~Christian Gerdes} {and}
  \bibinfo{person}{Sarah~M. Thornton}.} \bibinfo{year}{2015}\natexlab{}.
\newblock \bibinfo{booktitle}{\emph{Implementable Ethics for Autonomous
  Vehicles}}.
\newblock \bibinfo{publisher}{Springer Berlin Heidelberg},
  \bibinfo{address}{Berlin, Heidelberg}, \bibinfo{pages}{87--102}.
\newblock


\bibitem[\protect\citeauthoryear{Giordano, Martelli, and Dupr{\'e}}{Giordano
  et~al\mbox{.}}{2013}]%
        {Giordano13TDLASP}
\bibfield{author}{\bibinfo{person}{Laura Giordano}, \bibinfo{person}{Alberto
  Martelli}, {and} \bibinfo{person}{Daniele~Theseider Dupr{\'e}}.}
  \bibinfo{year}{2013}\natexlab{}.
\newblock \showarticletitle{Temporal Deontic Action Logic for the Verification
  of Compliance to Norms in ASP}. In \bibinfo{booktitle}{\emph{Proc. of the
  14th Intl. Conf. on Artificial Intelligence and Law}}
  \emph{(\bibinfo{series}{ICAIL '13})}. \bibinfo{publisher}{ACM},
  \bibinfo{address}{New York, NY, USA}, \bibinfo{pages}{53--62}.
\newblock
\showISBNx{978-1-4503-2080-1}


\bibitem[\protect\citeauthoryear{Hawkins}{Hawkins}{[n. d.]}]%
        {Hawkins.2019}
\bibfield{author}{\bibinfo{person}{Andrew Hawkins}.} \bibinfo{year}{[n.
  d.]}\natexlab{}.
\newblock \showarticletitle{Waymo’s self-driving cars are now available on
  Lyft’s app in Phoenix}.
\newblock \bibinfo{journal}{\emph{The Verge}} (\bibinfo{year}{[n. d.]}).
\newblock


\bibitem[\protect\citeauthoryear{Hilpinen and McNamara}{Hilpinen and
  McNamara}{2013}]%
        {McNamaraChapter}
\bibfield{author}{\bibinfo{person}{Risto Hilpinen} {and} \bibinfo{person}{Paul
  McNamara}.} \bibinfo{year}{2013}\natexlab{}.
\newblock \bibinfo{title}{Deontic Logic: A historical survey and introduction}.
\newblock
\newblock


\bibitem[\protect\citeauthoryear{Horty}{Horty}{2001}]%
        {Horty01DLAgency}
\bibfield{author}{\bibinfo{person}{John Horty}.}
  \bibinfo{year}{2001}\natexlab{}.
\newblock \bibinfo{booktitle}{\emph{Agency and Deontic Logic}}.
\newblock \bibinfo{publisher}{Cambridge University Press}.
\newblock


\bibitem[\protect\citeauthoryear{Koymans}{Koymans}{1990}]%
        {Koymans90}
\bibfield{author}{\bibinfo{person}{R. Koymans}.}
  \bibinfo{year}{1990}\natexlab{}.
\newblock \showarticletitle{Specifying Real-Time Properties with Metric
  Temporal Logic.}
\newblock \bibinfo{journal}{\emph{Real-Time Systems}} \bibinfo{volume}{2},
  \bibinfo{number}{4} (\bibinfo{year}{1990}), \bibinfo{pages}{255--299}.
\newblock


\bibitem[\protect\citeauthoryear{Lomuscio, Qu, and Raimondi}{Lomuscio
  et~al\mbox{.}}{2017}]%
        {Lomuscio2017}
\bibfield{author}{\bibinfo{person}{Alessio Lomuscio}, \bibinfo{person}{Hongyang
  Qu}, {and} \bibinfo{person}{Franco Raimondi}.}
  \bibinfo{year}{2017}\natexlab{}.
\newblock \showarticletitle{{MCMAS}: an open-source model checker for the
  verification of multi-agent systems}.
\newblock \bibinfo{journal}{\emph{Intl. Jrnl. on Software Tools for Technology
  Transfer}} \bibinfo{volume}{19}, \bibinfo{number}{1} (\bibinfo{date}{01 Feb}
  \bibinfo{year}{2017}), \bibinfo{pages}{9--30}.
\newblock


\bibitem[\protect\citeauthoryear{Manna and Pnueli}{Manna and Pnueli}{1992}]%
        {MannaP92}
\bibfield{author}{\bibinfo{person}{Zohar Manna} {and} \bibinfo{person}{Amir
  Pnueli}.} \bibinfo{year}{1992}\natexlab{}.
\newblock \bibinfo{booktitle}{\emph{The Temporal Logic of Reactive and
  Concurrent Systems --- Specification}}.
\newblock \bibinfo{publisher}{Springer}.
\newblock


\bibitem[\protect\citeauthoryear{McNamara}{McNamara}{2018}]%
        {dlStanford}
\bibfield{author}{\bibinfo{person}{Paul McNamara}.}
  \bibinfo{year}{2018}\natexlab{}.
\newblock \showarticletitle{Deontic Logic}.
\newblock \bibinfo{journal}{\emph{The Stanford Encyclopedia of Philosophy}}
  (\bibinfo{date}{Fall} \bibinfo{year}{2018}).
\newblock


\bibitem[\protect\citeauthoryear{Pnueli}{Pnueli}{1977}]%
        {Pnueli77sfcs}
\bibfield{author}{\bibinfo{person}{Amir Pnueli}.}
  \bibinfo{year}{1977}\natexlab{}.
\newblock \showarticletitle{The Temporal Logic of Programs}. In
  \bibinfo{booktitle}{\emph{Proceedings of the 18th IEEE Symposium Foundations
  of Computer Science}}. \bibinfo{pages}{46--57}.
\newblock


\bibitem[\protect\citeauthoryear{Prisacariu and Schneider}{Prisacariu and
  Schneider}{2012}]%
        {Prisacariu2012CL}
\bibfield{author}{\bibinfo{person}{Cristian Prisacariu} {and}
  \bibinfo{person}{Gerardo Schneider}.} \bibinfo{year}{2012}\natexlab{}.
\newblock \showarticletitle{A dynamic deontic logic for complex contracts}.
\newblock \bibinfo{journal}{\emph{The Journal of Logic and Algebraic
  Programming}} \bibinfo{volume}{81}, \bibinfo{number}{4}
  (\bibinfo{year}{2012}), \bibinfo{pages}{458 -- 490}.
\newblock
\showISSN{1567-8326}
\newblock
\shownote{Special Issue: NWPT 2009.}


\bibitem[\protect\citeauthoryear{Raimondi and Lomuscio}{Raimondi and
  Lomuscio}{2004}]%
        {Raimondi04automaticverification}
\bibfield{author}{\bibinfo{person}{Franco Raimondi} {and}
  \bibinfo{person}{Alessio Lomuscio}.} \bibinfo{year}{2004}\natexlab{}.
\newblock \showarticletitle{Automatic Verification of Deontic Interpreted
  Systems by Model Checking via OBDD's}. In \bibinfo{booktitle}{\emph{Procs. of
  the 16th European Conf. on Artificial Intelligence}}.
\newblock


\bibitem[\protect\citeauthoryear{Rizaldi and Althoff}{Rizaldi and
  Althoff}{2015}]%
        {Althoff15TrafficRules}
\bibfield{author}{\bibinfo{person}{A. Rizaldi} {and} \bibinfo{person}{M.
  Althoff}.} \bibinfo{year}{2015}\natexlab{}.
\newblock \showarticletitle{Formalising Traffic Rules for Accountability of
  Autonomous Vehicles}. In \bibinfo{booktitle}{\emph{2015 IEEE 18th
  International Conference on Intelligent Transportation Systems}}.
  \bibinfo{pages}{1658--1665}.
\newblock
\showISSN{2153-0017}


\bibitem[\protect\citeauthoryear{Shalev-Shwartz, Shammah, and
  Shashua}{Shalev-Shwartz et~al\mbox{.}}{2018}]%
        {RSSv6}
\bibfield{author}{\bibinfo{person}{Shai Shalev-Shwartz},
  \bibinfo{person}{Shaked Shammah}, {and} \bibinfo{person}{Amnon Shashua}.}
  \bibinfo{year}{2018}\natexlab{}.
\newblock \bibinfo{title}{On a Formal Model of Safe and Scalable Self-driving
  Cars}.  (\bibinfo{date}{October} \bibinfo{year}{2018}).
\newblock
\newblock
\shownote{arXiv:1708.06374v6.}


\bibitem[\protect\citeauthoryear{van~der Hoek and Wooldridge}{van~der Hoek and
  Wooldridge}{2003}]%
        {vanderHoek2003ATEL}
\bibfield{author}{\bibinfo{person}{Wiebe van~der Hoek} {and}
  \bibinfo{person}{Michael Wooldridge}.} \bibinfo{year}{2003}\natexlab{}.
\newblock \showarticletitle{Cooperation, Knowledge, and Time: Alternating-time
  Temporal Epistemic Logic and its Applications}.
\newblock \bibinfo{journal}{\emph{Studia Logica}} \bibinfo{volume}{75},
  \bibinfo{number}{1} (\bibinfo{date}{01 Oct} \bibinfo{year}{2003}),
  \bibinfo{pages}{125--157}.
\newblock
\showISSN{1572-8730}
\urldef\tempurl%
\url{https://doi.org/10.1023/A:1026185103185}
\showDOI{\tempurl}


\bibitem[\protect\citeauthoryear{{von Wright}}{{von Wright}}{1951}]%
        {vonWright51DL}
\bibfield{author}{\bibinfo{person}{Georg~H. {von Wright}}.}
  \bibinfo{year}{1951}\natexlab{}.
\newblock \showarticletitle{Deontic Logic}.
\newblock \bibinfo{journal}{\emph{Mind}} \bibinfo{volume}{60},
  \bibinfo{number}{237} (\bibinfo{date}{January} \bibinfo{year}{1951}).
\newblock


\bibitem[\protect\citeauthoryear{Yaghoubi and Fainekos}{Yaghoubi and
  Fainekos}{2019}]%
        {YaghoubiF19MLtesting}
\bibfield{author}{\bibinfo{person}{Shakiba Yaghoubi} {and}
  \bibinfo{person}{Georgios Fainekos}.} \bibinfo{year}{2019}\natexlab{}.
\newblock \showarticletitle{Gray-box Adversarial Testing for Control Systems
  with Machine Learning Components} \emph{(\bibinfo{series}{HSCC '19})}.
  \bibinfo{publisher}{ACM}, \bibinfo{address}{New York, NY, USA}, 6.
\newblock
\showISBNx{978-1-4503-6282-5}


\end{thebibliography}
\end{document}